\documentclass[12pt,draftcls,onecolumn]{IEEEtran}
\IEEEoverridecommandlockouts
\usepackage{cite}
\usepackage{graphicx}      

\usepackage[utf8]{inputenc}
\usepackage[T1]{fontenc}
\usepackage{amsthm}

\newcommand{\mS}{\mathcal{S}}
\newcommand{\mN}{\mathcal{N}}
\newcommand{\mP}{\mathcal{P}}

\newcommand{\mE}{\mathcal{E}}
\newcommand{\mF}{\mathcal{F}}

\newcommand{\mD}{\mathcal{D}}
\newcommand{\mG}{\mathcal{G}}

\newcommand{\mL}{\mathcal{L}}

\newcommand{\mV}{\mathcal{V}}
\newcommand{\mU}{\mathcal{U}}

\newtheorem{theorem}{\bf Theorem}

\newtheorem{proposition}{\bf Proposition}
\newtheorem{lemma}{\bf Lemma}

\newtheorem{assumption}{\bf Assumption}

\newtheorem{definition}{\bf Definition}
\newtheorem{remark}{\bf Remark}

\newcommand{\vnorm}[1]{\| #1 \|}

\usepackage{color}
\definecolor{Red}{rgb}{1,0,0}
\definecolor{Blue}{rgb}{0,0,1}
\definecolor{Green}{rgb}{0,1,0}
\definecolor{magenta}{rgb}{1,0,.6}
\definecolor{lightblue}{rgb}{0,.5,1}
\definecolor{lightpurple}{rgb}{.6,.4,1}
\definecolor{gold}{rgb}{.6,.5,0}
\definecolor{orange}{rgb}{1,0.4,0}
\definecolor{hotpink}{rgb}{1,0,0.5}
\definecolor{newcolor2}{rgb}{.5,.3,.5}
\definecolor{newcolor}{rgb}{0,.3,1}
\definecolor{newcolor3}{rgb}{1,0,.35}
\definecolor{darkgreen1}{rgb}{0, .35, 0}
\definecolor{darkgreen}{rgb}{0, .6, 0}
\definecolor{darkred}{rgb}{.75,0,0}

\newcommand{\bigO}[1]{\mathcal{O} \left(#1 \right)}
\newcommand{\ts}{t^\star}

\newcommand{\real}{{\mathbb{R}}}

\newcommand{\integers}{\mathbb{Z}}

\newcommand{\realnonnegative}{{\mathbb{R}}_{\ge 0}}


\usepackage{amsmath}

\usepackage{amssymb}
%

%
\usepackage{algorithm}
\usepackage{algorithmic}
\usepackage{array}
\usepackage{comment}
\usepackage{url}
\usepackage{amsmath}

\DeclareMathOperator*{\argmax}{arg\,max}
\newcommand\subscr[2]{#1_{\textup{#2}}}

\begin{document}

\title{Robust Distributed Averaging: When are Potential-Theoretic Strategies Optimal?}
\author{Ali Khanafer and Tamer Ba\c{s}ar
\thanks{A preliminary version of this work will was presented at the IEEE Conference on Decision and Control (CDC), Florence, Italy 2013 \cite{KhanaferTouriBasarCDC13}. This work was supported in part by an AFOSR MURI Grant FA9550-10-1-0573.}
\thanks{Ali Khanafer and Tamer Ba\c{s}ar are with the Coordinated Science Laboratory, ECE Department, University of Illinois at Urbana-Champaign, USA {\tt\small {khanafe2,basar1}@illinois.edu}}
}

\maketitle

\begin{abstract}
We study the interaction between a network designer and an adversary over a dynamical network. The network consists of nodes performing continuous-time distributed averaging. The adversary strategically disconnects a set of links to prevent the nodes from reaching consensus. Meanwhile, the network designer assists the nodes in reaching consensus by changing the weights of a limited number of links in the network. We formulate two Stackelberg games to describe this competition where the order in which the players act is reversed in the two problems. Although the canonical equations provided by the Pontryagin's maximum principle seem to be intractable, we provide an alternative characterization for the optimal strategies that makes connection to potential theory. Finally, we provide a sufficient condition for the existence of a saddle-point equilibrium for the underlying zero-sum game.
\end{abstract}


\section{Introduction}
Various physical and biological phenomena where global patterns of behavior stem from local interactions have been modeled using linear distributed averaging dynamics. In such dynamics an agent updates its value as a linear combination of the values of its neighbors. Averaging dynamics is the basic building block in many multi-agent systems, and it is widely used whenever an application requires multiple agents, who are graphically constrained, to synchronize their measurements. Examples include formation control, coverage, distributed estimation and optimization, and flocking \cite{SaberMurray,SaberFlocking,NedicOzdaglarParrilo}. Besides engineering, linear distributed averaging finds applications in other fields as well. For instance, social scientists use averaging to describe the evolution of opinions in networks \cite{JacksonGolub}.

In practice, communication among agents is prone to different types of non-idealities which can affect the convergence properties of the associated distributed algorithms. Transmission delays \cite{NedicOzdaglar}, noisy links \cite{XiaoBoyd,TouriNedic}, and quantization \cite{KashyapBasarSrikant} are some examples of non-idealities that are due to the physical nature of the application. In addition to physical restrictions, researchers have also studied averaging dynamics in the presence of malicious nodes in the network \cite{Sundaram,SandbergJohansson}. 

In \cite{KhanaferTouriBasarNecSys12}, we explored the effect of an external adversary who attempts to prevent the nodes from reaching consensus by launching network-wide attacks. When the adversary is capable of disconnecting certain links in the network, we derived the optimal strategy of the adversary and demonstrated that it admits a potential-theoretic analogy. In this paper, we also introduce a network designer that attempts to counter the effect of the adversary and help the nodes reach consensus. The designer is capable of changing the weights of certain links. Both the adversary and the designer are constrained by their physical capabilities, e.g., battery life and communication range. To capture such constraints, we allow the adversary and the designer to affect only a fixed number of links at any point in time. The conflicting objectives of the designer and the adversary calls for a game-theoretic formulation to study their interaction.

Such a competition between a network designer and an adversary can occur in various practical applications. For example, in a wireless network, the link weights in such a network represent the capacities of the corresponding links. The designer can modify the capacity of a certain link using various communication techniques such as introducing parallel channels between two nodes as in orthogonal frequency division multiple access (OFDMA) networks \cite{tse2005fundamentals}. {In OFDMA networks, the number of parallel links between two nodes is usually limited \cite{remy2014lte}. To capture this limitation, we limit the amount by which the designer can increase the capacity of a given link. The adversary can be a jammer who is capable of breaking links by injecting high noise signals that disrupt the communication among nodes. The adversary is assumed to have sufficient transmit power to disrupt the communication over any link, no matter what the number of parallel channels is.} 

Our model in this paper is different from the models in the current literature in two ways: (i) the adversary and the designer compete over a dynamical network. This is different from the problems studied in the computer science and economics communities where the network is usually static \cite{Goyal2010Robust}; (ii) the players in our model are constrained and do not have an infinite budget. This enables us to model practical scenarios more closely rather than allowing the malicious behavior to be unrestricted as in \cite{LeBlanc,BicchiBullo,Sundaram}, {where it is assumed that the network contains nodes that are misbehaving. In addition, those papers focus on finding necessary and sufficient conditions for the network to reach consensus in the presence of malicious nodes, and observability theory is the main tool used to study such problems. Here, we assume that all the nodes are normal, and we focus on identifying the links that are of importance to the adversary and the designer who have conflicting objectives. This requires us to borrow tools from differential games and optimal control theory.} 

The main goal of this work is to derive optimal strategies for the designer and the adversary. {By modeling the adversary as a strategic player and deriving optimal defense strategies, we guarantee robustness against worst-case attacks, unlike existing approaches in which attacks on links were modeled as random failures \cite{kar2010distributed}}. Because the order in which the players act affects the resulting utilities, we formulate two Stackelberg games based on the order of play, allowing a different player to have the \emph{first-move-advantage}. When the adversary is allowed to play first, he is capable of restricting the available actions of the designer since some links will disappear from the network. Hence, if we were to cast the problem as a zero-sum game between the players, we should not expect the existence of a saddle-point equilibrium (SPE) in pure strategies. The question we would like to answer is then: \emph{are there scenarios where the order of play does not affect the eventually attained utilities of the players, which leads to the existence of an SPE?}

Accordingly, the contributions of this paper are as follows. We capture the interaction between the designer and the adversary by formulating two separate problems. In the min--max problem, the designer declares a strategy first to which the adversary reacts by its optimal response. The second problem is a max--min one where the order of play is reversed. Assuming the controllers do not switch infinitely many times over a finite interval among the available actions, we derive the optimal strategies for both problems in terms of potential-theoretic quantities by working directly with the utility functional. Furthermore, we demonstrate that the derived strategies satisfy the necessary conditions provided by the maximum principle (MP). Finally, we derive a sufficient condition guaranteeing the existence of an SPE.

The rest of the paper is organized as follows. In Section \ref{ProbDesc} we describe the min--max and max--min problems. In Section \ref{Problems}, we derive the Stackelberg strategies and show that they satisfy the MP. We provide a sufficient condition for the existence of an SPE in Section \ref{suffCond}. We end the paper with concluding remarks and delineation of future research directions of Section \ref{Conclusion}. An Appendix includes a proof of one of the theorems and two technical results.

\subsection*{Notation and Terminology} 
We will use $\sum_{j > i} (.)$ to mean $\sum_{j=2}^{n}\sum_{i=1}^{j-1}(.) $, $[.]^T$ to denote the transpose of a vector or a matrix $[.]$, and $\mathbf{1}$ to denote the $n$-dimensional column vector of $1$'s. The Euclidean norm of a vector is denoted by $\|.\|_2$ and the $\ell_1$-norm of a vector is denoted by $\|.\|_1$. {The absolute value of a scalar variable is denoted by $|.|$, which we also use to denote the cardinality of a set---the intended use of this operator will be clear from the context.} The $(i,j)$-th element of a matrix $X$ is denoted by $X_{ij}$. We will often use $x$ to refer to a function or its value at a given time instant; the context should make the distinction clear. We will use the words ``strategy" and ``action" interchangeably; since we are seeking optimal open-loop strategies in this paper, both terms are equivalent.
{
A graph is a pair $\mG = (\mN,\mE)$, where $\mN$ is the set of nodes, and $\mE \subseteq \mN \times \mN$ is the set of edges. An edge from node $i\in \mN$ to node $j\in \mN$ is denoted by $e_{ij}$, i.e., $e_{ij}:=(i,j)$. A graph is called undirected if $e_{ij} \in \mE$ if and only if $e_{ji} \in \mE$. A path is a collection of nodes $\{i_1,\hdots, i_l \} \subseteq \mN$, $l \in  \integers_{> 1} $, such that $e_{i_k i_{k+1}} \in \mE$ for all $k \in \{1,\hdots,l-1 \}$. We call an undirected graph \emph{connected} if it contains a path between any two nodes in $\mN$.
}
{
Given an undirected graph $\mG = (\mN,\mE)$, we define the projection operator $\Phi: \mE \times \real \to \mE$ such that $\Phi((e,r)) = e$, for some $(e,r) \in \mE \times \real$. When applied to a set $S \subset \mE \times \real$, the mapping $\Phi$ is defined as follows:
\begin{eqnarray*}
\Phi(S) & = & \left\{
  \begin{array}{l l}
    \bigcup\limits_{(e,r)\in S} \Phi((e,r)), &  \text{$S \neq \emptyset$}  \\
    0, &  \text{$S = \emptyset$}
  \end{array} \right.
\end{eqnarray*}
Given $S \subset \mE \times \real$, with $|S| = k$, let $\pi(S)=\{(e_1,r_1),\hdots,(e_k,r_k) \}$, where $r_i\in \real$ and $e_i\in \mE$ for all $i\in \{1,\hdots,k\}$, be an ordering of the elements of $S$ such that $r_1 \leq \hdots \leq r_k$. Then, given $\ell \in \integers_{\geq 0}$, we define the set operator $\Phi_\ell: \mE\times \real \to \mE$ as:
\begin{eqnarray*}
\Phi_\ell(S) & = & \left\{
  \begin{array}{l l}
    \Phi(S) , &  \text{$\ell > k$}  \\
    \{e_1,\hdots,e_\ell \}, &  \text{$0<\ell \leq k$}\\
    0, & \text{$\ell = 0$ or $k=0$}
  \end{array} \right.
\end{eqnarray*}
Throughout the paper, we will be dealing with undirected graphs. Although both $e_{ij},e_{ji}$ belong to the set of edges $\mathcal{E}$ in such graphs, we do not distinguish between the two edges, and we treat them as a single edge. As a result, in any set defined over $\mE\times \real$, we include a \emph{single} tuple $(e_{ij},r_{ij})$, $r_{ij} \in \real$, to represent both edges. 
}

\section{Problem Formulation} \label{ProbDesc}
Consider a connected network of $n$ nodes and $m$ links described by a weighted undirected graph $\mathcal{G} = (\mN, \mE)$. The value, or state, of the nodes at time instant $t\in \realnonnegative$ is given by $x(t) = [x_1(t),...,x_n(t)]^T$. The nodes start with an initial value $x(0)=x_0$, and they are interested in computing the average of their initial values, $\subscr{x}{avg} = \frac{1}{n}\sum_{i=1}^n x_i(0)$, via local averaging. We consider the continuous-time averaging dynamics given by
\begin{equation} \label{systemEqn}
\dot{x}(t) = Ax(t), \quad x(0) = x_0,
\end{equation}
where the matrix $A$, $A_{ij} = a_{ij} \in \real$, has the following properties:
\begin{eqnarray*} 
A = A^T & , &   \quad A\mathbf{1} = 0, \\
A_{ij} \geq 0 & , & \quad A_{ij} = 0 \iff e_{ij} \notin \mE, \quad  i \neq j.
\end{eqnarray*}
Define $\bar{x} = \mathbf{1}\subscr{x}{avg} \in \mathbb{R}^n$ and let $M = \frac{1}{n}\mathbf{1}\mathbf{1}^T$. A well-known result states that, under the above assumptions, the nodes will reach consensus as $t\to \infty$, i.e.,  $\lim_{t\to \infty} x(t) =\bar{x}$ \cite{SaberMurray}. To achieve their respective objectives, the designer and the adversary control the elements of $A$ as we describe next. This will render the matrix $A$ to be time-varying.

The adversary attempts to slow down convergence by breaking at most $\ell \leq m$ links at each time $t$. Let $u_{ij}(t) \in \{0,1\}$ be the weight the adversary assigns to link $e_{ij} \in \mE$ at time $t \realnonnegative$. He breaks link $e_{ij}$ when $u_{ij}(t) = 1$. Define $r := {n\choose2}$. The action set of the adversary is then
\begin{eqnarray*}
U & = & \{w \in \mathbb{R}^r: w =[w_{12},...,w_{1n},w_{23},...,w_{(n-1)n}]^T, w_{ij} \in \{0,1\}, \\
&& w_{ij}=0  \text{ if } e_{ij}\notin \mE, \|w\|_1 \leq \ell \}. 
\end{eqnarray*}
The set of admissible controls, $\mU$, consists of all functions that are piecewise continuous in time and whose range is $U$. Given a time interval $[0,T]$, we can formally write
\begin{equation*}
\mU = \left\{u:[0,T] \to U \hspace{1mm} | \text{ $u$ is a piecewise continuous function of $t$} \right\}. \\
\end{equation*}

We introduce a network designer who attempts to accelerate convergence by controlling the weights of the edges. The designer can change the weight of a given link by adding $v_{ij}(t)$ to its weight $a_{ij}$. We assume that $v_{ij}(t) \in \{0,b\}$ and that the number of links the designer modifies is at most $\ell \leq m$. Given the above definitions, we can write down the $(i,j)$-th element, $i\neq j$, of the matrix $A(u(t),v(t))$ as
\begin{equation} \label{eqn::Laplace}
A_{ij}(u(t),v(t)) = (a_{ij} + v_{ij}(t)) (1-u_{ij}(t)), \quad \text{for all } e_{ij}\in \mE
\end{equation}
We require that the resulting matrix is a negative Laplacian of the graph; hence, we must have $A_{ii}(u(t),v(t)) = -\sum_{j \neq i} A_{ij}(u(t),v(t))$, for all $i \in \mV$. 

Given a time interval $[0,T]$, define the following functional:
\begin{equation*}
J(u,v) = \frac{1}{2}\int_0^T k(t)\vnorm{ x(t)-\bar{x}}_2^2 dt,
\end{equation*}
where the weighting factor $k(t)$ is positive and integrable over $[0,T]$, {which can, for example, be viewed as a discounting factor, such as $k(t) = e^{-\alpha t}$ for some $\alpha > 0$.} This constitutes the utility function of the adversary, and that of the designer is $-J(u,v)$. We will study two problems. In the first one, the adversary acts first by selecting the links he is interested in breaking. Then, the network designer optimizes his choices over the resulting graph, which we denote by $\mG(u(t)) = (\mN, \mE(u(t)))$, where $\mE(u(t)) = \mE \setminus \{e_{ij} \in \mE : u_{ij}(t)=1\}$. In this case, the action set of the designer can be written as
\begin{eqnarray*}
V(u(t)) & = & \left\{w\in \mathbb{R}^r: w =[w_{12},...,w_{1n},w_{23},...,w_{(n-1)n}]^T, w_{ij} \in \{0,b \},\right. \\
&& \left. w_{ij}=0  \text{ if } e_{ij}\notin \mE(u(t)), \|w\|_1 \leq b \ell  \right\}.
\end{eqnarray*}
The set of admissible controls for the designer, $\mV(u)$, consists of all piecewise continuous functions whose range is $V(u)$. Formally, we define
\begin{equation*}
\mV(u) = \left\{v:[0,T] \to V(u(t)) \hspace{1mm} | \text{ $v$ is a piecewise continuous function of $t$} \right\}. \\
\end{equation*}

The max--min problem can now be formally written as\footnote{Even though existence of a maximum and a minimum has not yet been shown at this stage, we will still call this the ``max--min" problem in anticipation of such an existence result later in the paper. The formal definition below is still in terms of sup and inf. The same argument applies to the min--max problem to be introduced shortly.}
\begin{eqnarray*}
&\sup\limits_{u \in \mathcal{U}} \inf\limits_{v \in \mathcal{V}(u)} & J(u,v) \\
& \text{subject to} & \dot{x}(t)=A(u(t),v(t))x(t), \quad x(0) = x_0.
\end{eqnarray*}

In the second problem, the order is reversed. Because the designer acts first in this problem, he can optimize over the entire graph $\mG$. Thus, the action set of the designer in this problem is $V := V(0)$ and the set of its admissible controls is $\mV := \mV(0)$; the sets of actions and admissible controls of the adversary remain the same. We can then write
\begin{eqnarray*}
&\inf\limits_{v \in \mathcal{V}} \sup\limits_{u \in \mathcal{U}}  & J(u,v) \\
& \text{subject to} & \dot{x}(t)=A(u(t),v(t))x(t), \quad x(0) = x_0.
\end{eqnarray*}

In a computer network, the max--min problem allows the network designer (who is the maximizer here) to architect networks that are robust against strategic virus diffusion. The min--max problem finds applications in army combat situations where the designer (the minimizer) attempts to counter the attacks of the enemy intending to disrupt the network communication. 

Given the nature of the players' possible modifications of the network, as described by \eqref{eqn::Laplace}, we can view the actions of the players as switches among the possible Laplacian matrices resulting from modifying the links. Moreover, the capability of the designer and the adversary to change the system matrix renders it as a ``switched" one. The optimal controllers for such systems can exhibit Zeno effect, i.e., they may switch infinitely many times over a finite interval. In order to explicitly eliminate the possibility of infinite switching, we make the following assumption in the remainder of this paper.
\begin{assumption} \label{assume::costSwitch}
Let $0\leq r_1< \hdots < r_{K_u}$ be the switching times of some $u \in \mU$ and $0 \leq s_1< \hdots < s_{K_v}$ be those of some $v \in \mV$. We assume that $K_u,K_v \in \integers_{\geq0}$ are finite, and that there exists a globally minimum dwell time $\tau> 0$ such that
\begin{equation}  \label{eqn::dwellTime}
\tau \leq \min \left\{r_{i+1}-r_i, s_{i+1}-s_i, |r_i-s_j|: 1\leq i \leq K_u, 1\leq j < K_v\right\},
\end{equation}
over which the system matrix $A(u,v)$ is time-invariant.
\end{assumption}
{Note that this assumption is well motivated for practical reasons. Consider,
for example, a communication network where an adversary is a jammer injecting an interfering signal at some links. If the adversary chooses to change the set of links it is jamming, there must be some delay for the adversary to change its configuration.} Now, we make the following assumption for both problems:
\begin{assumption} \label{assume::commonInfo}
The initial matrix $A(0,0)$, the time interval $[0,T]$, the values $\ell$ and $b$, and the initial state $x_0$ are common information to both players.
\end{assumption}

We recall the definition of an SPE.
\begin{definition}[Saddle-Point Equilibrium (SPE) \hspace{-0.1mm}\cite{BasarOlsder}] \label{def::existSPE}
The pair $(u^\star,v^\star)$ constitutes an SPE if it satisfies the following pair of inequalities
\begin{equation}
J(u,v^\star) \leq J(u^\star,v^\star) \leq J(u^\star,v), \label{eqn::SPE}
\end{equation}
for $u\in \mU,$ $v \in \mV$.
\end{definition}
The following remarks are now in order.
\begin{remark}
(Non-Rectangular Strategy Sets and Existence of SPE) When the strategy sets are rectangular, i.e., the strategy of one player does not restrict the strategy space of the other, the following relationship holds:
\begin{equation} \label{eqn::upperlowerV}
\underline{V} = \sup\limits_{u \in \mathcal{U}} \inf\limits_{v \in \mathcal{V}} J(u,v) \leq \inf\limits_{v \in \mathcal{V}}  \sup\limits_{u \in \mathcal{U}} J(u,v) = \overline{V},
\end{equation}
where $\underline{V}, \overline{V}$ are called, respectively, the lower and upper values of the game.
When the strategy sets are non-rectangular, however, the order in (\ref{eqn::upperlowerV}) may not hold. Moreover, one should not expect the pair of inequalities (\ref{eqn::SPE}) to hold, and hence an SPE may not exist. In the max--min problem in this paper, the strategy sets of the players are non-rectangular as the adversary's action, removing links from $\mG$, could restrict the actions available to the designer. 

\end{remark}
\begin{remark} \label{rem::complex}
(Problem Complexity) Let us consider the problem of the adversary for a given strategy of the designer. Assume that the adversary can act at $K_u \in \integers_{\geq 0}$ given time instances over the interval $[0,T]$. Then, for $\ell \leq m$, assuming that $\|u(t)\|_1 =\ell$ for all $t \in \realnonnegative$, the total number of links that need to be tested in a brute-force approach is
\begin{equation}\label{bruteForce}
{m\choose{\ell}}^{K_u} \geq \left(\frac{m}{\ell}\right)^{\ell K_u}.
\end{equation}
Clearly, the brute-force approach leads to an exponential number of computations as a function of $K_u$. The same argument applies to the problem faced by the network designer.
\end{remark}
\section{Optimal Strategies} \label{Problems}
We will now present the solutions to the two problems introduced above. In \cite{KhanaferTouriBasarCDC13}, we have shown that the canonical equations provided by the maximum principle (MP) are intractable due to the interdependence between the state, costate, and the optimal controls; therefore, it may not be possible to obtain the optimal strategies in closed form using the MP. Here, we take an alternative route to arrive at the optimal strategies of the players by working directly with the objective functional. In what follows, we will often drop the time index and other arguments for notational simplicity. We will be using the term ``connected component" to refer to a set of connected nodes which have the same values. The following quantities will be central to the derivation of the optimal strategies:
\begin{equation} \label{eqn::potTheory}
 \nu_{ij} := -(x_i-x_j)^2, \quad w_{ij} := (a_{ij}+v_{ij})\nu_{ij}.
\end{equation}

\subsection{The Min--Max Problem}
The following theorem presents the optimal strategy of the adversary in the min--max problem. Define the set 
\begin{equation} \label{eqn::setL}
\mL_\ell(v) = \Phi_\ell\left(\{(e_{ij},(a_{ij}+v_{ij})\nu_{ij}): e_{ij}\in \mE \}\right) \subseteq \mE.
\end{equation} 
 
\begin{theorem} \label{thm::implement}
Under Assumptions \ref{assume::costSwitch} and \ref{assume::commonInfo}, and for a fixed strategy $v$ of the designer, the optimal strategy of the adversary in the min--max problem is
\begin{eqnarray*}
u^\star_{ij}(v) & = & \left\{
  \begin{array}{l l}
    1, &  \text{$e_{ij} \in \mL_{\ell}(v)$}  \\
    0, &  \text{$e_{ij} \notin  \mL_{\ell}(v)$} 
  \end{array} \right.
\end{eqnarray*}
If the adversary has an optimal strategy of breaking fewer than $\ell$ links, then either $\mathcal{G}$ has a cut of size less than $\ell$ or the nodes have reached consensus by time $t$. In either of these cases, breaking $\ell$ links is also optimal.
\end{theorem}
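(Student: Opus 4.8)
The plan is to work directly with the objective functional $J(u,v)$ and compute the effect of the adversary's instantaneous choice on the integrand. Fix the designer's strategy $v$ and consider a short interval on which, by Assumption~\ref{assume::costSwitch}, the system matrix is time-invariant. The time derivative of $\tfrac12\vnorm{x(t)-\bar x}_2^2$ along the dynamics \eqref{systemEqn} is $(x-\bar x)^T A(u,v)(x-\bar x)$, and since $A(u,v)$ is a negative graph Laplacian with off-diagonal entries given by \eqref{eqn::Laplace}, a standard Laplacian identity rewrites this as $\sum_{j>i} (a_{ij}+v_{ij})(1-u_{ij})\,\nu_{ij} = \sum_{j>i} (1-u_{ij})\,w_{ij}$, with $\nu_{ij},w_{ij}$ as in \eqref{eqn::potTheory}. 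Because $\nu_{ij}\le 0$, every $w_{ij}\le 0$, so each link the adversary breaks (each $u_{ij}=1$) removes a nonpositive term from this sum, i.e.\ makes the instantaneous decay rate of $\vnorm{x-\bar x}_2^2$ larger (less negative). Hence, pointwise in time, the adversary wants to delete exactly the $\ell$ links with the most negative $w_{ij}=(a_{ij}+v_{ij})\nu_{ij}$; that is precisely the set $\mL_\ell(v)=\Phi_\ell(\{(e_{ij},(a_{ij}+v_{ij})\nu_{ij}):e_{ij}\in\mE\})$ from \eqref{eqn::setL}. The delicate point is that this is a \emph{greedy pointwise} argument, while $J$ is an integral over $[0,T]$ of a quantity that depends on the whole trajectory; I would make it rigorous by the usual backward induction over the finitely many switching subintervals (finite by Assumption~\ref{assume::costSwitch}), arguing that on the last subinterval the greedy choice is optimal given the terminal state, and then propagating: a choice that makes $\vnorm{x(s)-\bar x}_2^2$ pointwise smaller on every earlier subinterval can only decrease the remaining integral, since the dynamics are linear and the later-optimal value is monotone in the state norm.

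Given that characterization of $u^\star(v)$, the second statement is a short corollary. Suppose the adversary has an optimal strategy that breaks only $\ell' < \ell$ links on some subinterval. Two cases: either all remaining candidate weights $w_{ij}$ are zero, or not. If some $w_{ij} < 0$ were still available, breaking one more link would strictly increase the instantaneous decay-rate sum (by $-w_{ij} > 0$) and hence, by the same monotonicity, not decrease $J$ — so breaking $\ell$ links is at least as good, contradicting nothing and proving the ``also optimal'' claim. So we may assume every remaining $w_{ij}=(a_{ij}+v_{ij})\nu_{ij}=0$. Since $a_{ij}+v_{ij}>0$ on every edge of the current graph, this forces $\nu_{ij}=-(x_i-x_j)^2=0$, i.e.\ $x_i=x_j$ across every surviving edge. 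If the surviving graph is connected this means all nodes share a common value; combined with $A\mathbf 1=0$ (which preserves the average) this common value is $\subscr{x}{avg}$, so consensus has been reached by time $t$. If instead the surviving graph is disconnected, then the $\ell'<\ell$ links the adversary removed already contain a cut of $\mG$, so $\mG$ has a cut of size at most $\ell' < \ell$. In either case the stated alternative holds, and — padding the adversary's action with arbitrary extra links up to $\ell$, which changes none of the $w_{ij}=0$ terms in the integrand — breaking $\ell$ links remains optimal.

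The main obstacle is the rigorous passage from the pointwise greedy optimality of $\mL_\ell(v)$ to optimality over the interval $[0,T]$: one must argue carefully that minimizing the state-norm integrand on an initial segment cannot hurt the tail of the cost, which is where Assumption~\ref{assume::costSwitch} (finiteness of switches, positive dwell time) and the linearity/monotonicity of the averaging dynamics do the real work. A secondary technical point is the Laplacian rewriting of $(x-\bar x)^T A (x-\bar x)$ and the observation that shifting $x$ by $\bar x\in\ker A$ leaves it unchanged; this is routine but should be stated cleanly since $\nu_{ij}$ and $w_{ij}$ are defined in terms of $x$, not $x-\bar x$, and the two agree here precisely because $x_i-x_j=(x_i-\subscr{x}{avg})-(x_j-\subscr{x}{avg})$.
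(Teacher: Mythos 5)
Your instantaneous computation is correct and is, in substance, the same first-order quantity the paper extracts by Taylor-expanding the matrix exponentials: writing $\tfrac{d}{dt}\tfrac12\vnorm{x-\bar x}_2^2=(x-\bar x)^TA(u,v)(x-\bar x)=\sum_{j>i}(1-u_{ij})w_{ij}$ identifies the greedy pointwise rule ``break the $\ell$ most negative $w_{ij}$,'' which is exactly $\mL_\ell(v)$. Your treatment of the ``fewer than $\ell$ links'' alternative (all surviving $w_{ij}=0$ forces $\nu_{ij}=0$ across surviving edges, hence consensus if connected and a cut of size $<\ell$ otherwise) also matches the paper's case analysis. So the skeleton is right.

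The genuine gap is the step you yourself flag as delicate: you propose to pass from pointwise greediness to optimality of $J$ over $[0,T]$ by backward induction together with the claim that ``the later-optimal value is monotone in the state norm,'' i.e.\ that making $\vnorm{x(s)-\bar x}_2^2$ larger at the end of one subinterval can only increase the remaining integral. That claim is false as stated: the cost-to-go depends on the full vector $x(s)-\bar x$, not on its norm. Two disagreement vectors with ordered norms can have their ordering reversed after applying the same contraction $P(t-s)-M$, depending on how the disagreement is distributed over the spectrum of the current Laplacian (e.g., disagreement concentrated on a heavy, fast-mixing edge versus a light, slow one). So ``pointwise larger integrand early $\Rightarrow$ larger tail'' does not follow, and without it the backward induction is circular — it assumes the myopic rule is optimal, which is the theorem. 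The paper closes exactly this hole differently: it performs a local exchange at a single switch time $t^\star$ (gain $2(t-t^\star)(w_{kl}-w_{ij})+\bigO{\delta^2}$ on $[t^\star,t_0+\delta]$), and then has the improved strategy \emph{mimic} the original one afterwards, showing by a further expansion that the common future matrices cancel to first order in $F_1-F_2$, so the gain persists on every subsequent subinterval regardless of the future trajectory. You would need to replace your monotonicity step with an argument of this perturbation-and-mimicking type (or some other comparison that tracks the full state, not just its norm) for the proof to go through.
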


\begin{proof}
For a fixed strategy of the designer $v \in \mV$, we will show that it is optimal for the maximizer to rank the links based on their $w_{ij}$ values, where $w_{ij}$ was defined in (\ref{eqn::potTheory}). Under Assumption \ref{assume::costSwitch}, the function $x$ becomes piecewise continuous. Hence, the function $w_{ij}$, for all $e_{ij} \in \mE$, is also piecewise continuous and its value cannot change abruptly over a finite interval. As a result, we can regard the system as a time-invariant one over a small interval $[t_0,t_0+\delta] \subset [0,T]$, where $0 < \delta \leq \tau$, and $\tau$ was defined in (\ref{eqn::dwellTime}). The proof consists of two steps.
\begin{enumerate}
\item Showing that, over a small interval $[t_0,t_0+\delta]$, it is optimal for the adversary to switch from a strategy $u \in \mU$ to another strategy $u^\star\in \mU$, where $u^\star$ entails breaking the $\ell$ links with the lowest $w_{ij}$ values.
\item Showing that allowing $u^\star$ to mimic $u$ for the remaining time of the problem preserves the gain obtained over $[t_0,t_0+\delta]$.
\end{enumerate}

Over a small interval, $u$ and $u^\star$ induce certain system matrices. Let the system matrix corresponding to $u$ over $[t_0,t_0+\delta]$ be $A(u,v)=A$, and let $\|u\|_1 < \ell$ over this interval. Because the control strategies of both players are time-invariant over this interval, we have
\begin{equation} \label{eqn::stateFixed}
x(t) = e^{A(t-t_0)}x(t_0), \quad t \in [t_0,t_0+\delta].
\end{equation}
Let $P(t):=e^{At}$. Due to the structure of $A$, $P(t)$ is a doubly stochastic matrix for $t \geq 0$ \cite[p. 63]{Norris}. Note that we can write $x(t_0) = \tilde{P}x_0$, where $\tilde{P}$ is some doubly stochastic matrix. Indeed, assume that either or both controls had switched once at some time $\tilde{t}_0 \in [0,t_0)$, and that the system matrix over $[0,\tilde{t}_0)$ was $\tilde{A}_1$, and the system matrix corresponding to $[\tilde{t}_0,t_0)$ was $\tilde{A}_2$. Then $x(t_0) = e^{\tilde{A}_2(t_0-\tilde{t_0})}e^{\tilde{A}_1\tilde{t_0}}x_0$. Because both $e^{\tilde{A}_1t}$, $e^{\tilde{A}_2t}$ are doubly stochastic matrices, their product is also doubly stochastic. We can readily generalize this result to any number of switches in the interval $[0,t_0)$. With this observation, we can write
\[
x(t)-\bar{x} =P(t-t_0)\tilde{P}x_0-Mx_0 = (P(t-t_0) - M)x(t_0),
\]
where the last equality follows from the fact that
\begin{equation} \label{prop::M}
\tilde{P}M = M\tilde{P} = M,\text{ $\tilde{P}$ is doubly stochastic}.
\end{equation}
We want to show that switching from strategy $u$ to strategy $u^\star$ at some time $t^\star \in [t_0,t_0+\delta]$, can improve the utility of the adversary. To this end, we assume that the matrix induced by $u^\star$ over $[t_0,t^\star)$ is $A$, while the system matrix corresponding to $u^\star$ over $[t^\star,t_0+\delta]$ is $B$. Define the doubly stochastic matrix $Q(t) := e^{Bt}$, $t\geq 0$. Over $[t^\star,t_0+\delta]$, the strategies $u$ and $u^\star$ are identical except at link $e_{ij} \in \mE$, where $u_{ij} = 0$ and $u_{ij}^\star = 1$, i.e., $\|u\|_1 < \|u^\star\|_1 $ over this sub-interval. It follows that:
\begin{equation}
A_{ij} > B_{ij}=0, \quad A_{kl} = B_{kl} \quad \forall e_{kl} \neq e_{ij}.  \label{eqn::AvsB}
\end{equation}
Formally, we want to prove the following inequality:
\begin{eqnarray*}
\int_{t_0}^{t_0+ \delta} &k(t)&\vnorm{(P(t-t_0)-M)x(t_0)}_2^2dt  \\
< \int_{t_0}^{t^\star} &k(t)&\vnorm{(P(t-t_0)-M)x(t_0)}_2^2dt \nonumber \\
& + & \int_{t^\star}^{t_0+\delta} k(t)\vnorm{(Q(t-t^\star)-M)P(t^\star - t_0)x(t_0)}_2^2dt, \nonumber
\end{eqnarray*}
or equivalently
\begin{eqnarray}
\int_{t^\star}^{t_0+\delta} k(t)\cdot \left[\vnorm{(Q(t-t^\star)-M)P(t^\star-t_0)x(t_0)}_2^2\right. \nonumber\\
- \left. \vnorm{(P(t-t_0)-M)x(t_0)}_2^2\right] dt > 0. \label{eqn::intermediate}
\end{eqnarray}
Using (\ref{prop::M}) and the semi-group property, (\ref{eqn::intermediate}) simplifies to
\begin{equation}
\int_{t^\star}^{t_0+\delta} k(t)\cdot x(t_0)^T\Lambda(t,t^\star)x(t_0) dt > 0, \label{ineq::toProve}
\end{equation}
where $\Lambda(t,t^\star)= P(t^\star-t_0)Q(2(t-t^\star))P(t^\star-t_0) - P(2(t-t_0))$.
A sufficient condition for (\ref{ineq::toProve}) to hold is
\begin{equation} \label{eqn::suffToProve}
h(t,x(t_0)) = x(t_0)^T\Lambda(t,t^\star)x(t_0)>0, \text{ for } t> t^\star.
\end{equation}
As $\delta \downarrow 0$, we can write $P(t) = I + tA+\bigO{\delta^2}$, where $\bigO{\delta^2}/\delta \leq L$ for sufficiently small $\delta$ and some finite constant $L$. We therefore have
\begin{eqnarray}\label{eqn::matApprox}
&& \Lambda(t,t^*) = \left(I+(t^\star-t_0) A+\bigO{\delta^2}\right)\left(I+2(t-t^\star)B \right. \nonumber \\
&&\left.+\bigO{\delta^2}\right) \left(I+(t^\star-t_0) A+\bigO{\delta^2}\right) - \left(I+2(t-t_0)A \right. \nonumber \\
&&\left. +\bigO{\delta^2}\right) = 2(t-t^\star)B + 2(t^\star-t_0)A - 2(t-t_0)A  \nonumber \\
&&+ \bigO{\delta^2}  = 2(t-t^\star)(B-A) + \bigO{\delta^2}. 
\end{eqnarray}
For sufficiently small $\delta$, the first term dominates the second term. Recall that the quadratic form of a Laplacian matrix $L$ exhibits the following form: $x^TLx = \sum_{l=1}^n \sum_{k=1}^{l-1} L_{kl}(x_l-x_k)^2$, for any $x \in \mathbb{R}^n$. Note that $B-A$ is in fact a negative Laplacian. Using (\ref{eqn::AvsB}), we can then write
\begin{eqnarray}
h(t,x(t_0)) & = & 2(t-t^\star)  \sum_{r>s}(A_{sr} - B_{sr})\left(x_r(t_0)-x_s(t_0)\right)^2  + \bigO{\delta^2}\nonumber \\
& = & 2(t-t^\star)A_{ij}\left(x_j(t_0)-x_i(t_0) \right)^2 + \bigO{\delta^2}. \label{eqn::condInit}
\end{eqnarray}
For small enough $\delta$, the higher order terms are dominated by the first term. Hence, if there is a link $e_{ij}$ such that $x_i(t_0) \neq x_j(t_0)$, there exists $t^\star$ such that $h(t,x(t_0))>0$ for $t \in \left(t^\star,t_0+\delta\right]$. Since $t_0$ was arbitrary, we conclude that the optimal strategy must satisfy $\|u^\star(t)\|_1 = \ell$ for all $t$, given that each of the $\ell$ links connects two nodes having different values. 

If no link such that $x_i(t_0) \neq x_j(t_0)$ exists at a given time $t_0$, the adversary does not need to break additional links, although breaking more links does not affect optimality because $h(t,x(t_0))=0$ in such a case. There are two cases under which the adversary cannot find a link to make $h(t,x(t_0))>0$: (i) The graph at time $t_0$ is one connected component. In this case, the nodes have already reached consensus and $\|u^\star(t)\|_1< \ell$. This is a \emph{losing strategy} for the adversary as he has failed in preventing nodes from reaching agreement; (ii) The graph at time $t_0$ has multiple connected components, and the number of links connecting the components is less than $\ell$. The adversary here possesses a \emph{winning strategy} with $\|u^\star(t)\|_1 < \ell$, as he can disconnect $\mathcal{G}$ into multiple components and prevent consensus. 

Next, we need to show that the adversary will modify the $\ell$ links with the lowest $w_{ij}$ values. Let us again restrict our attention to the interval $[t_0,t_0+\delta]$ where the adversary applies strategy $u$. Assume (to the contrary) that the links the adversary breaks over this interval are not the ones with the lowest $w_{ij}$ values. In particular, assume that the adversary chooses to break link $e_{kl}$, while there is a link $e_{ij}$ such that $w_{ij} < w_{kl}$. Assume that the adversary switches at time $t^\star \in [t_0,t_0+\delta]$ to strategy $u^\star$ by \emph{breaking} link $e_{ij}$ and \emph{unbreaking} link $e_{kl}$. Then, (\ref{eqn::condInit}) becomes
\[
h(t,x(t_0))=2(t-t^*)\left(w_{kl}(t_0)-w_{ij}(t_0)  \right) + \bigO{\delta^2}.
\]
Hence, by following the same arguments as above, we can conclude that breaking $e_{kl}$ is not optimal. 

The second step of the proof is to show that switching to strategy $u^\star$ guarantees an improved utility for the adversary \emph{regardless of how the original trajectory corresponding to $u$ changes beyond time $t_0+\delta$}. To this end, we will assume that from time $t_0+\delta$ onward, strategy $u^\star$ will \emph{mimic} strategy $u$. Assume that strategy $u$ switches from matrix $A$ to matrix $C$ over the interval $[t_0+\delta,t_0+2\delta]$, and define $R(t) := e^{Ct}$. Hence, strategy $u^\star$ will also switch from the system matrix $B$ to matrix $C$. However, the trajectories corresponding to $u$ and $u^\star$ will have different initial conditions at time $t_0+\delta$, due to the switch that strategy $u^\star$ made at time $t^\star$. Fig. \ref{fig::proofSketch} illustrates this idea. Recall that according to $A$, we have $\|u\|_1 < \ell$ and $u_{ij} = 0$. Here, the system matrix $B$ can differ from the matrix $A$ in two ways: either (i) $B$ dictates breaking one additional link compared to $A$, or (ii) $B$ dictates breaking link $e_{ij}$ and unbreaking link $e_{kl}$ where $w_{ij} < w_{kl}$.  Consider Case (i) first and let us study the behavior of the system over the interval $[t_0+\delta, t_0 + 2\delta]$ where we can assume that the system is time-invariant.
\begin{figure}[!t]
\centering
\includegraphics[width=8cm]{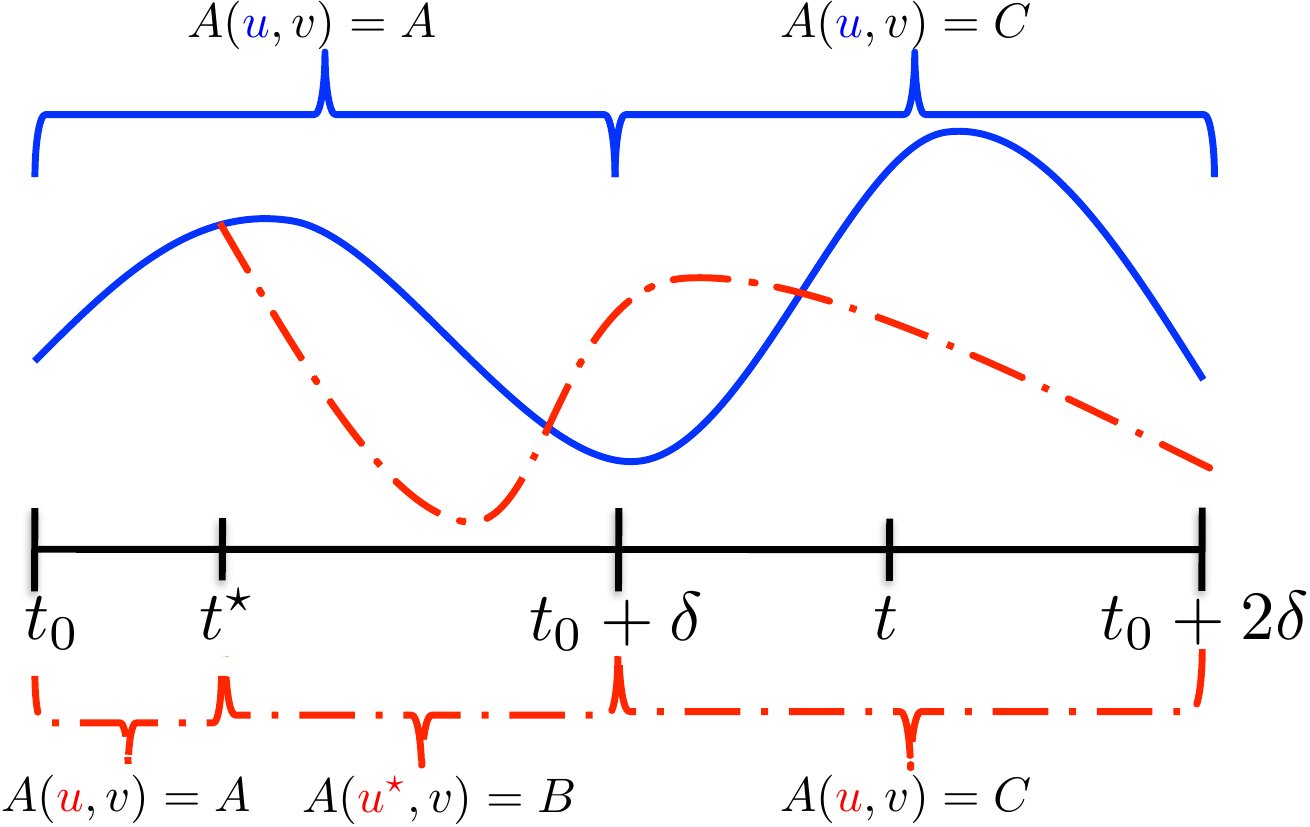}
\caption{A demonstration of the technique used in the proof. The blue solid trajectory corresponds to $u$ while the red dashed trajectory corresponds to $u^\star$.}
\label{fig::proofSketch}
\end{figure}
To show that the gain obtained over $[t_0,t_0+\delta]$ by the switch made by $u^\star$ is maintained over $[t_0+\delta,t_0+2\delta]$, we must prove the following inequality:
\begin{eqnarray}
&&\int_{t_0+\delta}^{t_0 + 2\delta} k(t)\cdot  \left[  L_1 -   L_2 \right] dt > 0,\label{eqn::ineqToProveProp}
\end{eqnarray}
where
\begin{eqnarray*}
L_1 & := &  \vnorm{(R(t-(t_0+\delta))-M )Q(t_0+\delta - t^\star)P(t^\star -t_0) x(t_0)}_2^2,\\
L_2 & := & \vnorm{ (R(t-(t_0+\delta))-M )P(t_0 + \delta -t_0) x(t_0)}_2^2.
\end{eqnarray*}
As before, it suffices to prove that the integrand $L_1-L_2$ is positive. Let us now expand both $L_1$ and $L_2$.
\begin{eqnarray*}
L_1 & = & x(t_0)^TP(\ts -t_0)Q(t_0+\delta-\ts)(R(t-(t_0+\delta))-M)(R(t-(t_0+\delta))-M)\\
&&Q(t_0+\delta-\ts)P(\ts -s)x(t_0)\\
& = & x(t_0)^T(P(\ts -t_0)Q(t_0+\delta-\ts)R(2(t-(t_0+\delta)))Q(t_0+\delta-\ts)P(\ts -t_0)-M)x(t_0).
\end{eqnarray*}
Similarly, $L_2 = x(t_0)^T(P(\delta)R(2(t-(t_0+\delta)))P(\delta)-M)x(t_0)$. We can then write
\begin{eqnarray*}
L_1 - L_2 &=& x(t_0)^T(P(\ts -t_0)Q(t_0+\delta-\ts)R(2(t-(t_0+\delta)))Q(t_0+\delta-\ts)P(\ts -t_0)\\ 
&&- P(\delta)R(2(t-(t_0+\delta)))P(\delta))x(t_0)\\
& := & x(t_0)^T(F_1-F_2)x(t_0).
\end{eqnarray*}
Before we perform a first-order Taylor expansion to the above terms, let us define the following quantities: $\tau_1 = t^\star -t_0$, $\tau_2 = (t_0+\delta) - t^\star$, and $\tau_3 = t - (t_0+\delta)$, where $t^\star \in [t_0,t_0+\delta]$ and $t \in [t_0+\delta, t_0+2\delta]$.
Using Proposition \ref{prop::bigO} in the Appendix, we can now expand $F_1$ and $F_2$ as follows:
\begin{eqnarray*}
F_1 & = & \left(I+\tau_1A+\bigO{\tau_1^2}\right)\left(I+\tau_2B+\bigO{\tau_2^2}\right)\left(I+2\tau_3C+\bigO{\tau_3^2}\right)\left(I+\tau_2B+\bigO{\tau_2^2}\right)\\
&&\left(I+\tau_1A+\bigO{\tau_1^2}\right) \\
& = & I + 2\tau_1A + 2\tau_2B + 2\tau_3C + \bigO{\delta^2} \\
F_2 & = & \left(I+\delta A + \bigO{\delta^2}\right)\left(I +2\tau_3C+\bigO{\tau_3^2}\right)\left(I+\delta A + \bigO{\delta^2}\right) \\
& = & I+2\delta A + 2\tau_3C + \bigO{\delta^2}.
\end{eqnarray*}
Hence, we have $F_1 - F_2 = 2\left(\left(t_0+\delta\right)-\ts\right)\left(B-A\right) +\bigO{\delta^2}$,
and thereby we obtain
\begin{eqnarray*}
L_1 - L_2 & = &  2\left(\left(t_0+\delta\right)-\ts\right) \sum_{r>s}(A_{sr} - B_{sr})\left(x_r(t_0)-x_s(t_0)\right)^2  + \bigO{\delta^2} \\
& = & 2\left(t_0+\delta -\ts\right)A_{ij}\left(x_j(t_0)-x_i(t_0)\right)^2 + \bigO{\delta^2}.
\end{eqnarray*}
If instead the matrix $B$ dictates breaking link $e_{ij}$ and unbreaking link $e_{kl}$, where $w_{ij} < w_{kl}$, the difference in the utilities would be $L_1 - L_2  = 2\left(t_0+\delta -\ts\right)(w_{kl}(t_0)-w_{ij}(t_0)) + \bigO{\delta^2}$. Hence, in both cases, for small enough $\delta$, we conclude that $L_1-L_2 >0$, which implies that (\ref{eqn::ineqToProveProp}) is satisfied, and the gain obtained by switching to system matrix $B$ at $t^\star \in [t_0,t_0+\delta]$ is maintained over $[t_0+\delta,t_0+2\delta]$. Note that the effect of switching to matrix $C$ is cancelled out in $F_1-F_2$, and hence $L_1-L_2$, since the strategy $u^\star$ is mimicking strategy $u$. Hence, by partitioning the interval $(t_0+2\delta,T]$ into small sub-intervals of length $\delta$ and repeating the above analysis, we conclude that the gain due to the switch at time $t^\star$ is preserved over the remaining time of the problem. 
\end{proof}

We can now derive the optimal strategy of the designer in the min--max problem. Recall the set $\mL_\ell(v) \subseteq \mE$ defined in \eqref{eqn::setL}. Let $\mL_{\ell,k}(v) \in \mE$ denote the $k$-th link of $\mL_\ell(v)$, $k \in \{1,\hdots, \ell \}$. Also, define $\mL^{-1}_{\ell,k}(v) \in \real$ as the value such that $\Phi(\mL_{\ell,k}(v),\mL^{-1}_{\ell,k}(v)) = \mL_{\ell,k}(v)$. We assume that $\mL^{-1}_{\ell,1}(v) \geq \hdots \geq \mL^{-1}_{\ell,\ell}(v)$. Further, define the sets $\mP(v) = \{(e_{ij},a_{ij}\nu_{ij}): e_{ij} \notin \mL_\ell(v)\} \subset \mE \times \real$ and $\overline{\mP}(v) = \{(e_{ij},\nu_{ij}): e_{ij} \notin \mL_\ell(v)\} \subset \mE \times \real$. We also define 
\begin{eqnarray*}
[v_{\mS}(b)]_{ij} & = & \left\{
  \begin{array}{l l}
    b, &  \text{$e_{ij} \in \mS$}  \\
    0, &  \text{$e_{ij} \notin  \mS$}
  \end{array} \right.
\end{eqnarray*}

\begin{theorem} \label{thm::minmaxMin}
In the min--max problem, and under Assumptions \ref{assume::costSwitch} and \ref{assume::commonInfo}, the optimal strategy of the designer is to run Algorithm \ref{algorithm} and set $v_{ij}^\star \in \{0,b\}$ if $\nu_{ij} = 0$. Further, it is optimal for the designer to modify $\ell$ links.
\end{theorem}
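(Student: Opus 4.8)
The plan is to mirror, step for step, the two-stage structure of the proof of Theorem~\ref{thm::implement}: first use Assumption~\ref{assume::costSwitch} to reduce the infinite-horizon problem to a sequence of static (instantaneous) optimizations over dwell-time windows, and then show that the static optimizer is exactly the strategy returned by Algorithm~\ref{algorithm}. Throughout, whenever the designer commits to a strategy $v$, the adversary is understood to play its best response $u^\star(v)$ from Theorem~\ref{thm::implement}, i.e.\ breaking the links in $\mL_\ell(v)$.

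\emph{Stage 1 (reduction to the instantaneous problem).} Fix a window $[t_0,t_0+\delta]$ with $0<\delta\le\tau$, over which, by Assumption~\ref{assume::costSwitch}, all controls and hence the system matrix are time-invariant. Let $v$ be an arbitrary admissible designer strategy and $v^\star$ the output of Algorithm~\ref{algorithm}; write $A:=A(u^\star(v),v)$ and $B:=A(u^\star(v^\star),v^\star)$ for the two negative Laplacians induced over the window, and $P(t):=e^{At}$, $Q(t):=e^{Bt}$. Exactly as in \eqref{eqn::stateFixed}--\eqref{ineq::toProve}, using \eqref{prop::M} and the semigroup property, the change in $J$ produced by switching from $v$ to $v^\star$ at a time $t^\star\in[t_0,t_0+\delta]$ is controlled by the sign of $x(t_0)^T\bigl(Q(2(t-t^\star))-P(2(t-t_0))\bigr)x(t_0)$ for $t>t^\star$. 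The first-order expansion $P(t)=I+tA+\bigO{\delta^2}$ of Proposition~\ref{prop::bigO}, carried out as in \eqref{eqn::matApprox}, isolates a dominant term proportional to $2(t-t^\star)\,x(t_0)^T(B-A)x(t_0)$. Although $A$ and $B$ may now differ in several off-diagonal entries — the adversary's response moves together with $v$ — the matrix $B-A$ is still symmetric with zero row sums, so the quadratic-form identity applies and
\[
x(t_0)^T(B-A)x(t_0)\;=\;\sum_{r>s}\bigl(A_{rs}-B_{rs}\bigr)\bigl(x_r(t_0)-x_s(t_0)\bigr)^2\;=\;D(v)-D(v^\star),
\]
where $D(v):=\sum_{e_{rs}\notin\mL_\ell(v)}(a_{rs}+v_{rs})\bigl(x_r(t_0)-x_s(t_0)\bigr)^2$ is the effective dissipation rate left after the adversary's cut. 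Hence, for $\delta$ small enough, switching to $v^\star$ changes $J$ over the window by a quantity whose leading term is a negative multiple of $D(v^\star)-D(v)$, so the switch is (weakly) beneficial to the designer precisely when $v^\star$ maximizes $D(\cdot)$. The mimicking step over $[t_0+\delta,t_0+2\delta]$ and the induction over a partition of $[0,T]$ then carry over verbatim from the proof of Theorem~\ref{thm::implement}, as do the degenerate situations in which $\mG$ is already one component or is already missing enough links — there all the relevant $\nu_{ij}$ vanish or the extra boosts are vacuous.

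\emph{Stage 2 (the static problem and the algorithm).} It now suffices to prove that $v^\star$ maximizes $D(v)$ over feasible designer actions and that a maximizer modifying exactly $\ell$ links exists. The latter, together with the last clause of the statement, follows from the observation that raising $v_{rs}$ from $0$ to $b$ never decreases $D$: if $e_{rs}$ remains outside $\mL_\ell$ the gain is $b(x_r-x_s)^2\ge0$ and $\mL_\ell$ is unchanged; if $e_{rs}$ was already in $\mL_\ell$ nothing changes; and if $e_{rs}$ crosses into $\mL_\ell$ it can only displace the borderline broken link $\mL_{\ell,1}$, whose weighted squared difference equals $-\mL^{-1}_{\ell,1}$ and is, by the adversary's ranking in Theorem~\ref{thm::implement}, at least that of $e_{rs}$, so the net change is again nonnegative. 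Thus spending the full budget is costless, and links with $\nu_{ij}=0$ (which affect neither $D$ nor $w_{ij}$, hence neither $\mL_\ell$) may be used freely to pad the budget — which is exactly the ``$v^\star_{ij}\in\{0,b\}$ if $\nu_{ij}=0$'' freedom. For optimality of $v^\star$ I would run an exchange argument against an arbitrary maximizer $v$ of $D$: one shows that each step of Algorithm~\ref{algorithm} — greedily boosting the safe link of largest $\nu_{ij}$ recorded in $\overline{\mP}(v)$, while using $\mP(v)$ and the ordering $\mL^{-1}_{\ell,1}\ge\cdots\ge\mL^{-1}_{\ell,\ell}$ to track the ``deflection'' events in which a boosted link enters $\mL_\ell$ and frees one of the $\mP$-links — weakly increases $D$, so that $v^\star$ is reached from $v$ without ever decreasing $D$.

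\textbf{Main obstacle.} Stage~1 is essentially a transcription of the already-completed argument for Theorem~\ref{thm::implement}; the genuine difficulty is Stage~2. Because the designer moves first in the Stackelberg sense, its boost perturbs $\mL_\ell(\cdot)$ and thereby reshuffles the entire combinatorial landscape, so the greedy rule must be justified against all the reorderings it can trigger — in particular one must rule out that a judiciously chosen deflection (boosting a link destined to be broken in order to reinstate a heavier one) beats greedy boosting, and one must handle ties among the $w_{ij}$ and $\nu_{ij}$ values cleanly. Making this exchange argument airtight is the crux, and is presumably why the correctness of Algorithm~\ref{algorithm} is isolated as a separate statement.
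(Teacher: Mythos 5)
Your Stage 1 is sound and is essentially the paper's own route: the paper likewise reduces the designer's problem, via the first-order expansion and the mimicking/partition argument inherited from Theorem \ref{thm::implement}, to maximizing at each instant the residual dissipation $\sum_{j>i}(a_{ij}+v_{ij})(1-u_{ij})(x_i(t_0)-x_j(t_0))^2$ (its \eqref{eqn::utilSmall}), which is exactly your $D(v)$. Your monotonicity observation --- boosting a link never decreases $D$, hence the full budget $\|v^\star\|_1=b\ell$ is spent and links with $\nu_{ij}=0$ are free padding --- is correct and in fact argued more carefully than the paper's one-line justification of these two clauses.

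The gap is exactly where you flag it: you never prove that the output of Algorithm \ref{algorithm} maximizes $D$. ``I would run an exchange argument \dots one shows that each step weakly increases $D$'' is a plan, not a proof, and it is precisely the content that distinguishes this theorem from Theorem \ref{thm::implement}: because the designer moves first, boosting reshapes $\mL_\ell(\cdot)$ and hence the adversary's response, so the instantaneous problem is a genuinely combinatorial bilevel optimization. The paper discharges it by (i) arguing that any deflection set $\mS$ of size $i$ that expels $\mL_{\ell,i}(0)$ from $\mL_\ell(v_\mS(b))$ weakly improves the designer's utility (its Cases 1 and 2, distinguishing whether links of $\mS$ are themselves destined to be broken), (ii) iterating $i$ from $\ell$ down to $1$ so that the most negative protectable target is protected first, and (iii) falling back to greedy boosting of the most negative $\nu_{ij}$'s outside $\mL_\ell(0)$ when no deflection is possible. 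You would need to supply at least this case analysis --- in particular ruling out that a non-greedy deflection, or protecting a less negative target before a more negative one, beats the algorithm's choice, and handling ties --- before the theorem is established. As it stands, the proposal proves the two auxiliary clauses of the statement (modify $\ell$ links; $\nu_{ij}=0$ links are free) but not the main one, the optimality of Algorithm \ref{algorithm}.
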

\begin{table}[!t]
 \centering
 \caption{
   Algorithm I: Computing the optimal strategy for the minimizer in the min--max problem.}\vspace*{-1em}
   \begin{tabular}{p{12cm}}
     \hline \\
0:\hspace*{1em} \textbf{input:} \text{a strategy $v$ with $\|v\|_1 = 0$}\vspace*{.2em}\\
1:\hspace*{1em} \textbf{for $i = \ell \downarrow  1$}\vspace*{.2em}\\
2: \hspace*{2em} \textbf{if} $\exists \mS \subseteq \Phi(\mP(0)), |\mS| = i$, $\mL_{\ell,i}(0) \notin \mL_\ell(v_\mS(b))$\vspace*{.1em}\\
3:\hspace*{4em} Set $v_{ij}^\star = b$, $\forall e_{ij} \in \mS \cup \Phi_{\ell-i}\left(\overline{\mP}(v_\mS(b))\right)$.\vspace*{.1em}\\
4:\hspace*{4em} \textbf{Exit} for loop.\vspace*{.1em}\\
5:\hspace*{3em}\textbf{end}\vspace*{.1em}\\
6:\hspace*{1em}\textbf{end}\vspace*{.1em}\\
7:\hspace*{1em}\textbf{if} $\|v\|_1 =0$ \vspace*{.1em}\\
8:\hspace*{2em} Set $v_{ij}^\star = b$ for all $e_{ij} \in \Phi_{\ell}\left(\overline{\mP}(0)\right)$.\vspace*{.1em}\\
9:\hspace*{1em}\textbf{end}\vspace*{.1em}\\ \\
  \hline
   \end{tabular}\label{tab:algo}\vspace{-0.5cm}
   \label{algorithm}
\end{table}

\begin{proof}
By Theorem \ref{thm::implement}, we deduce that $\|v^\star(t)\|_1 = b\ell$, because the designer would be at a disadvantage if he modifies fewer links than the adversary.

We first consider the designer's strategy over a fixed small interval $[t_0,t_0+\delta]$ over which both $u$ and $v$ are fixed. Using similar steps as those leading to (\ref{ineq::toProve}), and after applying a first-order Taylor expansion, we can write the designer's utility over $[t_0,t_0+\delta]$ as
\begin{equation} \label{eqn::utilSmall}
\int_{t_0}^{t_0+\delta} k(t)\cdot 2(t-t_0) \sum_{j>i} (a_{ij}+v_{ij})(1-u_{ij})(x_i(t_0) - x_j(t_0))^2 dt+ \bigO{\delta^2}.
\end{equation}
According to Theorem \ref{thm::implement}, and in the absence of the designer, it is optimal for the adversary to break the links in $\mL_\ell(0)$. Therefore, the designer must attempt to modify the ranking of the links such that the links (or a subset of them) in $\mL_\ell(0)$ are not in $\mL_\ell(v^\star)$. In essence, this is what Algorithm \ref{algorithm} attempts to achieve. Being of the lowest negative value, and hence the link both the adversary and the designer are interested in, let us explore how the designer can push $\mL_{\ell,\ell}(0)$ higher in the ranking of the link values. The designer can achieve this if under some strategy $v\in \mV$, the value $\mL^{-1}_{\ell,\ell}(0)$ is no longer among the lowest $\ell$ negative values; in other words, the designer can alter the ranking if there is a set $\mS \subset \mP(0)$, $|\mS|=\ell$, such that when he sets $v_{ij} = b$ for all links in $\mS$, there will be $\ell$ values that are smaller than $\mL^{-1}_{\ell,\ell}(0)$ (steps $2$ and $3$ in Algorithm \ref{algorithm}). The adversary will then break the links in $\mS$ and will spare the link corresponding to $\mL^{-1}_{\ell,\ell}(0)$ as required. To see why this is optimal, consider the following two cases, covering the types of links that can be in $\mS$.
\smallbreak
\noindent \textbf{Case 1:} If a link in $\mS$ is also in $\mL_\ell(0)$, then this is optimal due to the fact that the adversary will disconnect that link since it is in $\mL_\ell(0)$. Hence, if the designer can utilize this link to modify the ranking and protect a link whose associated value is more negative ($\mL_{\ell,\ell}(0)$ in this case), then this can only improve his utility. The same reasoning applies if more than one of the links in $\mS$ are also in $\mL_\ell(0)$.
\smallbreak
\noindent \textbf{Case 2:} If none of the links in $\mS$ is in $\mL_\ell(0)$, then necessarily some of the links in $\mL_\ell(0)$ will also be protected along with the link corresponding to $\mL^{-1}_{\ell,\ell}(0)$. This is because $|\mS| = \ell$, and the adversary can break at most $\ell$ links. Hence, this scenario is more favorable to the designer than the previous one and can therefore only improve his utility.

If such an $\mS$ exists, then the designer would have exhausted all possible moves, since $|\mS| = \ell$, and the algorithm terminates (step $4$ of the algorithm). Otherwise, if no such set exists in $\mP(0)$, then the designer should try to protect the next most negative link whose value is precisely $\mL^{-1}_{\ell,\ell-1}(0)$ by finding a set $\mS$ of size $\ell-1$. Since $\mL^{-1}_{\ell,\ell-1}(0) \geq \mL^{-1}_{\ell,\ell}(0)$, the link corresponding to $\mL^{-1}_{\ell,\ell}(0)$ along with $\mS$ will constitute the set of $\ell$ links that the adversary will break. Then, the designer should set $v_{ij}=b$ for all the links in $\mS$, and for the remaining action the designer should select the link with the most negative $\nu_{ij}$ that is \emph{not} in $\mL_\ell(v_\mS(b))$; this is precisely the set $\Phi_{1}\left(\overline{\mP}(v_\mS(b))\right)$ (step $3$ of the algorithm). The reason behind searching in $\overline{\mP}(v_\mS(b))$ and not in $\mP(v_\mS(b))$ after finding $\mS$ is that the $a_{ij}$'s only affect the utility of the designer when he attempts to alter the ranking. 

This procedure then repeats until the designer has tried to protect all the links in $\mL_\ell(0)$. If the designer fails in protecting \emph{all} the links in $\mL_\ell(0)$, then we must have $\|v\|_1 = 0$, i.e., the input strategy was not altered. Then, the optimal strategy is to set $v_{ij} =b$ for the links with most negative $\nu_{ij}$'s in $\overline{\mP}(0)$ (steps $7$ and $8$ in Algorithm \ref{algorithm}).

The final step of the proof is to show that applying Algorithm \ref{algorithm} over $[0,T]$ is optimal for the designer. To this end, it suffices to show that modifying links with lower $\nu_{ij}$ values is more beneficial to the designer, as Algorithm \ref{algorithm} attempts to protect these links. Given the links $e_{ij},e_{kl} \in \mE$, assume that $\nu_{ij} < \nu_{kl}$. Consider the two system matrices $A$ and $B$, and let $v_{ij}=0$, $v_{kl}=b$ and $v^\star_{ij}=b$, $v^\star_{kl}=0$. Assume that a strategy $v$ dictates applying matrix $A$ over $[t_0,t_0+\delta]$ and applying the matrix $C$ over $[t_0+\delta,t_0+2\delta]$. Also, assume that according to $v^\star$, the designer applies $A$ over $[t_0+\delta,t^\star)$, $B$ over $(t^\star,t_0+\delta]$, and $C$ over $[t_0+\delta,t_0+2\delta]$. Following the steps presented in step 2 of the proof of Theorem \ref{thm::implement}, we conclude that, for $\delta$ small enough, the quantity $-2\left(t_0+\delta -\ts\right)b(\nu_{kl} - \nu_{ij}) + \bigO{\delta^2}$ is negative. It then follows that the gain obtained by switching to system matrix $B$ at $t^\star \in [t_0,t_0+\delta]$ is maintained over $[t_0+\delta,t_0+2\delta]$. Hence, by partitioning the interval $(t_0+2\delta,T]$ into small sub-intervals of length $\delta$ and repeating the above analysis, we conclude that Algorithm \ref{algorithm} is optimal over $[0,T]$. 
\end{proof}

\subsection{The Max--Min Problem}
The following theorem specifies the optimal strategies of the adversary and the designer in the max--min problem. Let $\mathcal{F}_\ell(u) = \Phi_\ell(\{(e_{ij},\nu_{ij}): e_{ij} \in \mathcal{E}(u)\}) \subset \mE(u)$, where we recall that $\mE(u) = \mE \setminus \{e_{ij} \in \mE : u_{ij}(t)=1\}$, for some $u \in \mU$. {If $m < 2\ell$, the sets $\mE(u),\mathcal{F}_\ell(u)$  could contain fewer than $\ell$ links. For simplicity, we assume that $m \geq \ell$ in the following proof, which guarantees that $|\mathcal{F}_\ell(u)|=\ell$. However, the result of the theorem applies regardless of this assumption, and the modification of the proof is straightforward.}
\begin{theorem} \label{thm::maxminMin}
Under Assumptions \ref{assume::costSwitch} and \ref{assume::commonInfo}, and for a fixed strategy $u$ of the adversary, the optimal strategy of the network designer in the max--min problem is given by
\begin{eqnarray*}
v^\star_{ij}(u)  & = &   \left\{
  \begin{array}{l l}
    b, &  \text{$e_{ij} \in \mathcal{F}_{\ell}(u)$}  \\
    0, &  \text{$e_{ij} \notin \mathcal{F}_{\ell}(u)$}  
      \end{array} \right. 
  \end{eqnarray*}
If the designer has an optimal strategy of modifying fewer than $\ell$ links, then either $\mathcal{G}$ has a cut of size less than $\ell$ or the nodes have reached consensus by time $t$. In either of these cases, breaking $\ell$ links is also optimal.
\end{theorem}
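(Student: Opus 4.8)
The plan is to mirror, almost verbatim, the two-step argument behind Theorem~\ref{thm::implement}, swapping the adversary's link removals for the designer's weight additions and reversing the sense of every inequality because the designer is now the minimizer. Fix the adversary's strategy $u\in\mU$. By Assumption~\ref{assume::costSwitch} the trajectory $x(\cdot)$, and hence each $\nu_{ij}(\cdot)$ from \eqref{eqn::potTheory}, is piecewise continuous, so on a short window $[t_0,t_0+\delta]\subset[0,T]$ with $0<\delta\le\tau$ the matrix $A(u,v)$ is constant, $x(t)=e^{A(t-t_0)}x(t_0)$ with $x(t_0)=\tilde{P}x_0$ for some doubly stochastic $\tilde{P}$, and the identities $\tilde{P}M=M\tilde{P}=M$ together with the semigroup property collapse every cost comparison to a quadratic form $x(t_0)^T\Lambda(t,\ts)x(t_0)$, exactly as in \eqref{ineq::toProve} and \eqref{eqn::matApprox}.

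First I would show the designer exhausts its budget. Take a strategy $v$ with $\|v\|_1<b\ell$ on $[t_0,t_0+\delta]$ and let $v^\star$ agree with $v$ on $[t_0,\ts)$ but additionally set $v_{ij}=b$ on some link $e_{ij}\in\mE(u)$ with $x_i(t_0)\ne x_j(t_0)$ over $[\ts,t_0+\delta]$. With $B$ the matrix of $v^\star$ on that subinterval, $B_{ij}=A_{ij}+b$ and $B_{kl}=A_{kl}$ otherwise; the first-order expansion of $\Lambda$ as in \eqref{eqn::matApprox} and the Laplacian quadratic-form identity give, for $t>\ts$,
\[
x(t_0)^T\Lambda(t,\ts)x(t_0)=-2(t-\ts)\,b\,(x_i(t_0)-x_j(t_0))^2+\bigO{\delta^2}<0
\]
for $\delta$ small, so integrating against $k(t)>0$ shows $v^\star$ strictly lowers $J$. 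Hence whenever $\mE(u)$ contains at least $\ell$ links joining nodes of distinct values, an optimal designer uses $\|v^\star\|_1=b\ell$. If fewer than $\ell$ such links remain, the same perturbation yields an integrand that is identically zero for any surplus modification, and — exactly as in Theorem~\ref{thm::implement} — this happens only if all nodes already agree or $\mG(u)$ has separated into components each internally at consensus, forcing $\mG$ to have had a cut of size less than $\ell$; in either case modifying the full $\ell$ links remains optimal because the surplus modifications do not change the dynamics.

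Next I would pin down which $\ell$ links. If $v$ puts weight $b$ on $e_{kl}$ while leaving unmodified a link $e_{ij}\in\mE(u)$ with $\nu_{ij}<\nu_{kl}$, let $v^\star$ move the modification to $e_{ij}$; the same expansion gives
\[
x(t_0)^T\Lambda(t,\ts)x(t_0)=2(t-\ts)\,b\,(\nu_{ij}(t_0)-\nu_{kl}(t_0))+\bigO{\delta^2}<0,
\]
so the swap strictly helps the designer, and only the $\nu_{ij}$'s — not the base weights $a_{ij}$ — enter, since the marginal first-order gain of adding $b$ to $e_{ij}$ is $-b(x_i-x_j)^2$. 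Iterating the two perturbations shows the optimal within-window choice is $v_{ij}=b$ precisely on $\mathcal{F}_\ell(u)=\Phi_\ell(\{(e_{ij},\nu_{ij}):e_{ij}\in\mE(u)\})$, with links having $\nu_{ij}=0$ interchangeable. Finally, as in the second part of the proof of Theorem~\ref{thm::implement}, I would let $v^\star$ mimic $v$ beyond $t_0+\delta$: expanding the analogues of $L_1,L_2$ in \eqref{eqn::ineqToProveProp} via Proposition~\ref{prop::bigO}, every matrix introduced after $t_0+\delta$ cancels between $F_1$ and $F_2$, leaving $F_1-F_2=2(t_0+\delta-\ts)(B-A)+\bigO{\delta^2}$ and hence $L_1-L_2<0$; partitioning $(t_0+2\delta,T]$ into windows of length $\delta$ and repeating propagates the gain to $T$, and since $t_0$ is arbitrary the pointwise rule $v^\star_{ij}(u)=b$ on $\mathcal{F}_\ell(u)$ and $0$ elsewhere is optimal.

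The hard part will be, as in Theorem~\ref{thm::implement}, making the ``mimicking preserves the local gain regardless of the future trajectory'' step airtight: one must control the accumulated $\bigO{\delta^2}$ remainders uniformly over the finitely many windows (finiteness guaranteed by Assumption~\ref{assume::costSwitch}) so that the first-order term always dominates, and verify that the two perturbation families — budget exhaustion and link swapping — together generate the entire prescribed strategy. The sign reversals relative to Theorem~\ref{thm::implement} are routine bookkeeping but must be tracked carefully.
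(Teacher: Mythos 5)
Your proposal is correct and follows essentially the same route as the paper's own proof: both reduce the comparison to the quadratic form $x(t_0)^T\Lambda(t,\ts)x(t_0)$ on a short window, establish budget exhaustion and the $\nu_{ij}$-ranking by the same two perturbations (adding a modification, then swapping one), and propagate the local gain forward via the mimicking argument with $F_1-F_2=2(t_0+\delta-\ts)(B-A)+\bigO{\delta^2}$. The sign computations and the treatment of the fewer-than-$\ell$-links cases match the paper's argument as well.
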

\begin{proof}
The proof follows the same two steps used to prove Theorem \ref{thm::implement}. For a fixed strategy of the adversary $u$, we will show that it is optimal for the minimizer to rank the links based on their $\nu_{ij}$ values. Under Assumption \ref{assume::costSwitch}, the function $x$ becomes piecewise continuous. Hence, the function $\nu_{ij}$, for all $e_{ij}\in \mE(u)$, is also piecewise continuous and its value cannot change abruptly over a finite interval. As a result, we can regard the system as a time-invariant one over a small interval $[t_0,t_0+\delta] \subset [0,T]$, where $0 < \delta \leq \tau$, and $\tau$ was defined in (\ref{eqn::dwellTime}). 

Let $v$ be an arbitrary strategy of the designer with $\|v\|_1 < b\ell$. Over a small interval, $v$ and $v^\star$ induce certain system matrices. Let the system matrix corresponding to $v$ over $[t_0,t_0+\delta]$ be $A(u,v)=A$. Because the control strategies of both players are time-invariant over this interval, the state trajectory is given by (\ref{eqn::stateFixed}). We want to show that switching from strategy $v$ to strategy $v^\star$ at some time $t^\star \in [t_0,t_0+\delta]$ can improve the utility of the designer. To this end, we assume that the matrix induced by $v^\star$ over $[t_0,t^\star)$ is $A$, while the system matrix corresponding to $v^\star$ over $[t^\star,t_0+\delta]$ is $B$. Assume that $e_{ij} \in \mE(u)$, i.e., $u_{ij} = 0$. Over $[t^\star,t_0+\delta]$, the strategies $v$ and $v^\star$ are identical except at link $e_{ij}$, where $v_{ij} = 0$ and $v_{ij}^\star = b$, i.e., $\|v\|_1 < \|v^\star\|_1$ over this sub-interval. It follows that:
\begin{equation}
B_{ij}=a_{ij}+b > A_{ij} = a_{ij}, \quad A_{kl} = B_{kl}, \quad \forall  e_{kl} \neq e_{ij}.  \label{eqn::AvsB2}
\end{equation}
Following similar steps to those in the proof of Theorem \ref{thm::implement}, we conclude that it suffices to prove 
\begin{equation*} 
h(t,x(t_0)) = x(t_0)^T\Lambda(t,t^\star)x(t_0)<0, \text{ for } t> t^\star,
\end{equation*}
where $\Lambda(t,t^\star)$ was defined in the proof of Theorem \ref{thm::implement}. For sufficiently small $\delta$, we can arrive at the expansion in (\ref{eqn::matApprox}). Using (\ref{eqn::AvsB2}) and properties of Laplacian matrices, we can then write
\begin{eqnarray}
h(t,x(t_0)) & = & 2(t-t^\star)  \sum_{r>s}(A_{sr} - B_{sr})\left(x_r(t_0)-x_s(t_0)\right)^2  + \bigO{\delta^2}\nonumber \\
& = & -2(t-t^\star)b\left(x_j(t_0)-x_i(t_0) \right)^2 + \bigO{\delta^2}. \label{eqn::expandH}
\end{eqnarray}
For small enough $\delta$, the higher order terms are dominated by the first term. Hence, if there is a link $e_{ij}$ such that $x_i(t_0) \neq x_j(t_0)$, there exists $t^\star$ such that $h(t,x(t_0))<0$ for $t \in \left(t^\star,t_0+\delta\right]$. Since $t_0$ was arbitrary, we conclude that the optimal strategy must satisfy $\|v^\star(t)\|_1 = b\ell$ for all $t$, given that each of the $\ell$ links connects two nodes having different values. 

If no link such that $x_i(t_0) \neq x_j(t_0)$ exists at a given time $t_0$, the designer does not need to break additional links, although breaking more links does not affect optimality because $h(t,x(t_0))=0$ in such a case. There are two cases where the designer cannot find a link to make $h(t,x(t_0))<0$, and they were presented in the proof of Theorem \ref{thm::implement} in the case of the adversary. However, unlike the case of the adversary, Case (i) presents a winning strategy for the designer as the nodes are in agreement. Case (ii) is not necessarily a winning or a losing strategy for the designer.

Next, we need to show that the designer will modify the $\ell$ links in $\mE(u)$ with the lowest $\nu_{ij}$ values. Let us again restrict our attention to the interval $[t_0,t_0+\delta]$ where the designer applies strategy $v$. Assume (to the contrary) that the links the designer modifies over this interval are not the ones with the lowest $\nu_{ij}$ values. In particular, assume that the designer chooses to modify link $e_{kl} \in \mE(u)$, while there is a link $e_{ij} \in \mE(u)$ such that $\nu_{ij} < \nu_{kl}$. Assume that the designer switches at time $t^\star \in [t_0,t_0+\delta]$ to strategy $v^\star$ by modifying link $e_{ij}$ instead of link $e_{kl}$. Then, (\ref{eqn::expandH}) becomes
\[
h(t,x(t_0))=-2(t-t^*)b\left(\nu_{kl}(t_0)-\nu_{ij}(t_0)  \right) + \bigO{\delta^2}.
\]
Hence, by following the same arguments as above, we can conclude that modifying $e_{kl}$ is not optimal. 

The second step of the proof is to show that switching to strategy $v^\star$ guarantees an improved utility for the designer regardless of how the original trajectory corresponding to $v$ changes beyond time $t_0+\delta$. To this end, we will assume that from time $t_0+\delta$ onward, strategy $v^\star$ will mimic strategy $v$. Assume that strategy $v$ switches from matrix $A$ to matrix $C$ over the interval $[t_0+\delta,t_0+2\delta]$. Hence, strategy $v^\star$ will also switch from the system matrix $B$ to matrix $C$. However, the trajectories corresponding to $v$ and $v^\star$ will have different initial conditions at time $t_0+\delta$, due to the switch that strategy $v^\star$ made at time $t^\star$. Recall that according to $A$, we have $\|v\|_1 < b\ell$ and $v_{ij} = 0$. Here, the system matrix $B$ can differ from the matrix $A$ in two ways: either (i) $B$ dictates modifying one additional link compared to $A$, or (ii) $B$ dictates modifying link $e_{ij}$ instead of link $e_{kl}$ where $\nu_{ij} < \nu_{kl}$. Consider the behavior of the system over the interval $[t_0+\delta, t_0 + 2\delta]$ where we can assume that the system is time-invariant. To show that the gain obtained over $[t_0,t_0+\delta]$ by the switch made by $v^\star$ is maintained over $[t_0+\delta,t_0+2\delta]$, it suffices to prove that the integrant $L_1-L_2$ is negative, where $L_1$ and $L_2$ were defined in the proof of Theorem \ref{thm::implement}. For Case (i), by following the steps presented in the proof of Theorem \ref{thm::implement}, we can write
\begin{eqnarray*}
L_1 - L_2 & = &  -2\left(t_0+\delta -\ts\right)b\left(x_j(t_0)-x_i(t_0)\right)^2 + \bigO{\delta^2}.
\end{eqnarray*}
For Case (ii), the difference in utilities would be
\begin{eqnarray*}
L_1 - L_2  = 2\left(t_0+\delta -\ts\right)(w_{kl}(t_0)-w_{ij}(t_0)) + \bigO{\delta^2}.
\end{eqnarray*}
Hence, for small enough $\delta$, we conclude that $L_1-L_2 <0$. By partitioning the interval $(t_0+2\delta,T]$ into small sub-intervals of length $\delta$ and repeating the above analysis, we conclude that the gain due to the switch at time $t^\star$ is preserved over the remaining time of the problem. This concludes the proof.
\end{proof}

Next, we present the optimal strategy of the adversary. To this end, define the set 
$$
\mathcal{D}_\ell = \Phi_\ell(\{(e_{ij,}a_{ij}\nu_{ij}) : e_{ij}\in \mathcal{E}\} \cup \{(e_{ij},(a_{ij}+b)\nu_{ij}) : e_{ij}\in \mathcal{E} \}).
$$ 
\begin{theorem} \label{thm::maxminMax}
In the max--min problem, and under Assumptions \ref{assume::costSwitch} and \ref{assume::commonInfo}, the optimal strategy of the adversary is given by
\begin{eqnarray*}
u^\star_{ij}(t) & = & \left\{
  \begin{array}{l l}
    1, &  \text{$e_{ij} \in \mD_{\ell}$}  \\
    0, &  \text{$e_{ij} \notin  \mD_{\ell}$}
      \end{array} \right.
\end{eqnarray*}
Further, it is optimal for the adversary to break $\ell$ links.
\end{theorem}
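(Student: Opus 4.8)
The statement has two components: (i) it is optimal for the adversary to break exactly $\ell$ links, with the same degenerate exceptions as in Theorem~\ref{thm::implement} (an existing cut of size $<\ell$, or consensus already reached); and (ii) the broken set should be $\mD_\ell$. The plan is to follow the two-step localisation scheme already used in Theorems~\ref{thm::implement}, \ref{thm::minmaxMin}, and \ref{thm::maxminMin}: by Assumption~\ref{assume::costSwitch} the system is time-invariant on a block $[t_0,t_0+\delta]$ with $\delta\le\tau$, and $\nu_{ij}$, $a_{ij}\nu_{ij}$, $(a_{ij}+b)\nu_{ij}$ are piecewise continuous and may be treated as constants on the block. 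The one new ingredient relative to Theorems~\ref{thm::implement}--\ref{thm::maxminMin} is that the adversary moves first, so its effective payoff must already incorporate the designer's best response from Theorem~\ref{thm::maxminMin}: for any broken set $B=\Phi(u)$, the designer boosts $\mF_\ell(u)$, i.e.\ the $\ell$ survivors with the most negative $\nu_{ij}$; hence on the block the effective negative Laplacian carries weight $0$ on $B$, weight $a_{ij}+b$ on $\mF_\ell(u)$, and weight $a_{ij}$ elsewhere.

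For component (i) I would repeat the first-order computation behind \eqref{eqn::condInit}: if some surviving link $e_{ij}$ joins two nodes with $x_i(t_0)\ne x_j(t_0)$ and is not broken, then having the adversary additionally break it at some $\ts\in[t_0,t_0+\delta]$ replaces the effective matrix $A$ by a matrix $B$ with $A_{ij}>B_{ij}=0$, and the leading term $2(t-\ts)A_{ij}(x_i(t_0)-x_j(t_0))^2+\bigO{\delta^2}$ is strictly favourable to the adversary; one adds that breaking one more link only shrinks $\mE(u)$, so by Theorem~\ref{thm::maxminMin} the re-optimised designer response is weakly worse for the designer and the adversary does not lose from that re-optimisation. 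The two degenerate cases are handled exactly as in Theorem~\ref{thm::implement}, with ``breaking $\ell$ links is also optimal'' because the extra breaks contribute $h\equiv 0$.

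Component (ii) is the heart of the argument. Using the expansion in \eqref{eqn::utilSmall}--\eqref{eqn::condInit}, the adversary's incremental block payoff from switching its broken set from $B$ to $B^\star$ is, to first order, governed by $\Sigma(B)-\Sigma(B^\star)$, where
\[
\Sigma(B)\;:=\;\sum_{e_{ij}\in\mE\setminus B}a_{ij}|\nu_{ij}|\;+\;b\!\!\sum_{e_{ij}\in\mF_\ell(u)}\!\!|\nu_{ij}|
\]
is the convergence strength after the designer best-responds; the adversary therefore wants to \emph{minimise} $\Sigma(B)$, equivalently (since $\sum_{e_{ij}\in\mE}a_{ij}|\nu_{ij}|$ is fixed on the block) to \emph{maximise} $\sum_{e_{ij}\in B}a_{ij}|\nu_{ij}|-b\sum_{e_{ij}\in\mF_\ell(u)}|\nu_{ij}|$. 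I would then run an exchange argument: for $B\ne\mD_\ell$ with $|B|=\ell$, pick $e\in\mD_\ell\setminus B$ and $e'\in B\setminus\mD_\ell$ and show $B'=(B\setminus\{e'\})\cup\{e\}$ does not decrease this quantity, using that $e\in\mD_\ell$ iff one of $a_e\nu_e,(a_e+b)\nu_e$ is among the $\ell$ smallest of the $2m$ values $\{a_{ij}\nu_{ij}\}\cup\{(a_{ij}+b)\nu_{ij}\}$ while for $e'$ both of its values lie above that threshold, and then splitting into cases according to whether $e$ and $e'$ fall in the designer's boosted set $\mF_\ell$ before and after the swap. This is precisely what motivates the union-of-two-values construction of $\mD_\ell$: an edge is worth neutralising either because it already contributes a lot ($a_{ij}|\nu_{ij}|$ large) or because, left alive, it is a prime boost target for the designer ($(a_{ij}+b)|\nu_{ij}|$ large). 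The ranking-by-switch argument used after \eqref{eqn::condInit} then shows no block choice beats $\mD_\ell$, and the ``gain persists'' step transcribes the second step of Theorems~\ref{thm::implement} and \ref{thm::maxminMin}: from $t_0+\delta$ onward let the switched strategy mimic the original, expand via Proposition~\ref{prop::bigO}, observe the later switches cancel in $L_1-L_2$, and iterate over $\delta$-blocks up to $T$.

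The main obstacle, and the real difference from Theorems~\ref{thm::implement}--\ref{thm::maxminMin}, is that the designer's response is endogenous: when the adversary perturbs its broken set, the designer's boosted set $\mF_\ell$ moves---some survivor is ``promoted'' once a previously-boosted edge is broken---so both the exchange step and the persistence step must track this induced shift, and making the exchange inequalities close in every case is where the work is. The degenerate situations $|\mD_\ell|<\ell$ (which forces the broken set to be padded with arbitrary links) and $m<2\ell$ need straightforward separate handling.
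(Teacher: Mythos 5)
Your skeleton is the paper's: localize to a $\delta$-block via Assumption \ref{assume::costSwitch}, expand to first order so that the block payoff reduces to the weighted sum in \eqref{eqn::utilSmall}, argue block-optimality of the candidate broken set, and then transcribe the ``mimic and persist'' step of Theorem \ref{thm::implement}. Your component (i) and your persistence step match the paper essentially verbatim (the paper actually dispatches (i) in one line by appeal to Theorem \ref{thm::maxminMin}). For component (ii) you are more precise than the paper: the paper's entire justification for $\mD_\ell$ is the claim that the adversary must target the terms that ``can become very small,'' followed by the displayed inequality $-\sum_{e_{ij}\in\mD_\ell}(a_{ij}+v_{ij})\nu_{ij}\le-\sum_{e_{ij}\notin\mD_\ell}(a_{ij}+v_{ij})\nu_{ij}$, which is asserted rather than derived and does not track how the designer's best response moves with $u$. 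Your objective $\Sigma(B)$ is the correct formalization of what the first-moving adversary actually minimizes on a block after the designer best-responds.

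However, the step you defer --- ``making the exchange inequalities close in every case is where the work is'' --- is a genuine gap, and on the face of it the exchange does not close: $\mD_\ell$ need not minimize $\Sigma(B)$. Take $\ell=1$, $b=1$, and a path on four nodes with edges $e,f,g$ carrying $(a_e,|\nu_e|)=(10,1)$, $(a_f,|\nu_f|)=(1/11,11)$, $(a_g,|\nu_g|)=(1/11,5.5)$ (these $\nu$'s are freely realizable on a path). The six ranking values are $\{-10,-11\}$, $\{-1,-12\}$, $\{-0.5,-6\}$, so $\mD_1=\{f\}$, since $(a_f+b)\nu_f=-12$ is the most negative. But $\Sigma(\{f\})=10+0.5+b\cdot 5.5=16$, because once $f$ is gone the designer boosts $g$, whereas $\Sigma(\{e\})=1+0.5+b\cdot 11=12.5$: breaking $e$ is strictly better for the adversary. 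Breaking $f$ recovers only $a_f|\nu_f|=1$ of base dissipation and merely redirects the designer's boost from $f$ to $g$, while the construction of $\mD_\ell$ implicitly credits the adversary with the full $(a_f+b)|\nu_f|$; this is exactly the endogeneity you flagged. So your proposed swap of $e'=e\notin\mD_1$ for $e''=f\in\mD_1$ would have to \emph{decrease} the adversary's payoff, and no case analysis can rescue it. To be fair, this is precisely the point at which the paper's own proof is also only an assertion; a defensible block-level statement would be that the adversary solves $\min_{|B|=\ell}\Sigma(B)$ --- a genuinely combinatorial problem in which the designer's re-optimization must be carried along --- rather than reading off $\mD_\ell$.
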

\begin{proof}
By Theorem \ref{thm::maxminMin}, we deduce that $\|u^\star(t)\|_1 = \ell$, because the adversary would be at a disadvantage if he breaks fewer links than the designer. We first consider the adversary's strategy over a fixed small interval $[t_0,t_0+\delta]$ over which both $u$ and $v$ are fixed. Using a first-order Taylor expansion, the adversary's utility over $[t_0,t_0+\delta]$ is given by (\ref{eqn::utilSmall}).

In this problem, the adversary has the first-mover-advantage and needs to dispose of the links that can reduce his utility. The adversary knows that, according to $v^\star(u)$, the designer attempts to make the $\nu_{ij}$'s smaller by adding $b$ to the corresponding edge weights. However, we cannot rule out the possibility that $(a_{lk}+b)\nu_{lk} > a_{ij}\nu_{ij}$, for some links $e_{kl}$ and $e_{ij}$. Hence, the adversary is not only interested in finding the smallest negative $(a_{ij}+b)\nu_{ij}$'s, but also needs to consider the $a_{ij}\nu_{ij}$'s themselves. It follows that the adversary needs to find the terms that can become very small (negative) and set $u_{ij}=1$ to the corresponding links. But those links are exactly the ones included in $\mD_\ell$. Formally, we can write
\begin{eqnarray*}
 -\sum_{\substack{j>i\\ e_{ij}\in \mD_\ell}}  (a_{ij}+v_{ij})\nu_{ij} \leq -\sum_{\substack{j>i\\ e_{ij}\notin \mD_\ell}}  (a_{ij}+v_{ij})\nu_{ij},
\end{eqnarray*}
This confirms that, over the interval $[t_0,t_0+\delta]$, $u^\star$ is as claimed. 

The final step of the proof is to show that switching from a strategy $u$ to strategy $u^\star$ guarantees an improved utility for the designer over $[0,T]$. To this end, it suffices to show that modifying links with lower $w_{ij}$ values is more beneficial to the adversary. For the links $e_{ij},e_{kl} \in \mE$, assume that $w_{ij} < w_{kl}$. Consider the two system matrices $A$ and $B$, and let $u_{ij}=0$, $u_{kl}=1$ and $u^\star_{ij}=1$, $u^\star_{kl}=0$. Assume that the strategy $u$ dictates applying matrix $A$ over $[t_0,t_0+\delta]$ and applying the matrix $C$ over $[t_0+\delta,t_0+2\delta]$. On the other hand, we assume that according to $u^\star$, the adversary applies $A$ over $[t_0+\delta,t^\star)$, $B$ over $(t^\star,t_0+\delta]$, and $C$ over $[t_0+\delta,t_0+2\delta]$. Following the steps presented in step 2 of the proof of Theorem \ref{thm::implement}, we conclude that, for $\delta$ small enough, the quantity $2\left(t_0+\delta -\ts\right)(w_{kl} - w_{ij}) + \bigO{\delta^2}$ is positive, which implies that the gain obtained by switching to system matrix $B$ at $t^\star \in [t_0,t_0+\delta]$ is maintained over $[t_0+\delta,t_0+2\delta]$. Hence, by partitioning the interval $(t_0+2\delta,T]$ into small sub-intervals of length $\delta$ and repeating the above analysis, we conclude that $u^\star$ is optimal over $[0,T]$. 
\end{proof}

\begin{remark}
(Potential-Theoretic Analogy)
When the graph is viewed as an electrical network, $a_{ij}+v_{ij}$ can be viewed as the conductance of link $e_{ij} \in \mE$, and $x_i-x_j$ as the potential difference across the link. Therefore, according to Theorems \ref{thm::minmaxMin} and \ref{thm::maxminMin}, the optimal strategy of the designer in both problems involves finding the links with the highest potential difference (or the lowest $\nu_{ij}$'s) and increasing the conductance of those links by setting $v_{ij} = b$. This leads to increasing the power dissipation across those links, which translates to increasing the information flow across the network and results in faster convergence. The optimal strategy of the adversary should therefore involve breaking the links with the highest power dissipation. But power dissipation is given by $(a_{ij}+v_{ij})(x_i-x_j)^2$, and this is exactly what the adversary targets according to Theorems \ref{thm::implement} and \ref{thm::maxminMax}.
\end{remark}

\subsection{From Potential Theory to the Maximum Principle}
In this section, we show that the strategies derived in the above theorems satisfy the first-order necessary conditions for optimality given by the maximum principle (MP). We will address here the min--max problem; a theorem similar to the one presented below can be obtained also for the max--min problem. In \cite{KhanaferTouriBasarCDC13}, we showed that the optimal strategies provided by the MP for the min--max problem are the same as those derived in Theorems \ref{thm::implement} and \ref{thm::minmaxMin}, with the ranking of the links performed after replacing the quantity $\nu_{ij}$ with the quantity $(p_j-p_i)(x_i - x_j)$, where $p$ is the costate vector. The next theorem states that the potential-theoretic strategies satisfy the MP if the controllers do note switch infinitely many times over $[0,T]$.
\begin{theorem} \label{thm::anotherProof}
Under Assumptions \ref{assume::costSwitch} and \ref{assume::commonInfo}, the optimal strategies in Theorems \ref{thm::implement} and \ref{thm::minmaxMin} satisfy the canonical equations of the MP.
\end{theorem}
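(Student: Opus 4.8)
The plan is to verify, along the candidate pair, the three ingredients of Pontryagin's canonical system for the min--max problem -- the state equation, the adjoint (costate) equation, and the pointwise Hamiltonian condition -- and to show that the strategies of Theorems~\ref{thm::implement} and \ref{thm::minmaxMin}, together with the costate they induce, satisfy all three. First I would write the Hamiltonian $H(x,p,u,v,t)=\tfrac12 k(t)\vnorm{x-\bar x}_2^2+p^\top A(u,v)x$ and, using the Laplacian structure $A_{ii}=-\sum_{j\neq i}A_{ij}$, rewrite its control-dependent part by a direct computation as
\begin{equation*}
p^\top A(u,v)x=\sum_{j>i}(a_{ij}+v_{ij})(1-u_{ij})(x_i-x_j)(p_j-p_i).
\end{equation*}
The adjoint equation is then $\dot p=-k(t)(x-\bar x)-A(u,v)p$ with the transversality condition $p(T)=0$ (no terminal cost), and the state equation is the prescribed averaging dynamics \eqref{systemEqn}.

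Next I would observe that the first two canonical equations hold automatically: taking $(u^\star,v^\star)$ as in Theorems~\ref{thm::implement}--\ref{thm::minmaxMin}, letting $x^\star$ be the induced trajectory and $p^\star$ the unique solution of the (linear, terminally anchored) adjoint equation along $x^\star$, the state and costate equations hold by construction; moreover Assumption~\ref{assume::costSwitch} makes $A(u^\star,v^\star)$ piecewise constant with positive dwell time, so $x^\star,p^\star$ are well-defined piecewise-$C^1$ functions and the MP applies in its classical form. It then only remains to check the Hamiltonian condition. Since the term $\tfrac12 k\vnorm{x-\bar x}_2^2$ is control-free, maximizing $H$ over the adversary amounts to breaking the $\ell$ links with the most negative $(a_{ij}+v_{ij})(p_j-p_i)(x_i-x_j)$, and minimizing $H$ over the designer amounts to adding $b$ to the links that raise these quantities the most -- which is exactly the MP characterization from \cite{KhanaferTouriBasarCDC13}, i.e.\ the rules of Theorems~\ref{thm::implement}--\ref{thm::minmaxMin} with $\nu_{ij}$ replaced by $(p_j-p_i)(x_i-x_j)$.

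The heart of the argument is therefore to show that, along $(u^\star,v^\star)$, the link ranking induced by $(a_{ij}+v_{ij})(p_j-p_i)(x_i-x_j)$ agrees with the one induced by $(a_{ij}+v_{ij})\nu_{ij}$. I would aim at the stronger fact that the costate is positively aligned with the disagreement vector, $p^\star(t)=\kappa(t)\,(x^\star(t)-\bar x)+c(t)\mathbf{1}$ with $\kappa(t)>0$; since $\mathbf{1}^\top A=0$ the $\mathbf{1}$-component is irrelevant in $H$, and then $(p_j-p_i)(x_i-x_j)=-\kappa(t)(x_i-x_j)^2=\kappa(t)\,\nu_{ij}$ for every link, so the two rankings coincide and $(u^\star,v^\star)$ is precisely the pointwise maximizer/minimizer of $H$. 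To establish the alignment I would proceed dwell interval by dwell interval: on the terminal intervals, where $\mG(u^\star)$ has split into internally consensual components so that $y:=x^\star-\bar x$ satisfies $Ay\equiv0$, the adjoint integrates in closed form to $p^\star(t)=\bigl(\int_t^T k(s)\,ds\bigr)y(t)$ with $\kappa(t)=\int_t^T k(s)\,ds>0$ exactly; one then propagates this backward across the switching instants, expanding $x^\star$ and $p^\star$ over a window of length $\delta\le\tau$ to first order as in the proofs of Theorems~\ref{thm::implement}--\ref{thm::minmaxMin} and checking that the $\bigO{\delta^2}$ corrections cannot reverse any strict inequality among the $\nu_{ij}$, so the ranking, and hence the optimal action, is carried consistently from one window to the next.

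The main obstacle is exactly this last step. The naive ansatz $p=\kappa(t)y$ solves the adjoint ODE only when $Ay\equiv0$, so on the genuinely transient intervals (network still connected, states still moving) the costate need not be an \emph{exact} scalar multiple of $y$, and the propagation has to lean on a structural feature of the \emph{optimal} trajectory -- that the ordering of the $\nu_{ij}$ relevant to the $\Phi_\ell$ threshold is preserved by the optimal dynamics. An alternative, which I would try first, is to sidestep this altogether: by Theorems~\ref{thm::implement}--\ref{thm::minmaxMin} the potential-theoretic pair is optimal for the min--max problem, and since the MP furnishes necessary conditions for this optimum (the adversary's subproblem is a classical optimal control problem and, under Assumption~\ref{assume::costSwitch}, the designer's bilevel subproblem admits the canonical equations of \cite{KhanaferTouriBasarCDC13}), the pair must satisfy the canonical equations, and the agreement of the two link-ranking characterizations then follows a posteriori. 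I would fall back on the explicit interval-by-interval propagation only if a fully self-contained verification is demanded.
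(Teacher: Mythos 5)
Your explicit interval-by-interval route is essentially the paper's own proof. The Appendix proceeds exactly as you outline for the hard step: writing the canonical system, Taylor-expanding state and costate over each dwell interval of length $\delta\le\tau$, and propagating backward from the transversality condition $p(T)=0$ to show that the costate is, to first order, a positive multiple of the disagreement vector --- $p_{A_L}(t)=\xi(T,t)(I-M)x_{A_L}(t_L)+\bigO{\delta^2}$ on the last interval and $p_{A_i}(t)=\frac{T-t_i}{\delta}\xi(t_{i+1},t)(I-M)x_{A_i}(t_i)+\bigO{\delta^2}$ in general --- so that $(p_i-p_j)(x_j-x_i)$ is a positive multiple of $\nu_{ij}$ up to $\bigO{\delta^2}$ and the ranking by $w_{ij}$ implies the ranking by $f_{ij}=(a_{ij}+v_{ij})(p_i-p_j)(x_j-x_i)$. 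One small difference: you do not need the assumption that on the terminal intervals the graph has split into consensual components with $Ay\equiv 0$; the condition $p(T)=0$ alone forces the first-order alignment on the last dwell interval, and the backward recursion preserves the structure on every earlier interval without any such hypothesis.

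The shortcut you say you would try first --- ``the pair is optimal and the MP is necessary, hence the canonical equations hold a posteriori'' --- is the weak point of the proposal and should not be your primary route. For the leader in the min--max problem the classical MP is not an off-the-shelf necessary condition for the bilevel problem, and even for the follower the necessity argument presupposes existence of optimal controllers, which the paper explicitly defers as an open question in its Conclusion. Moreover, the substance of the theorem is the concrete agreement between the potential-theoretic ranking by $\nu_{ij}$ and the MP ranking by $(p_j-p_i)(x_i-x_j)$; an abstract ``necessity implies satisfaction'' argument would establish the existence of \emph{some} costate without exhibiting that agreement, and so would not replace the computation. Your fallback is in fact the correct main argument.
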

\begin{proof}
See the Appendix.
\end{proof}
\subsection{Complexity of the Optimal Strategies}
We next study the complexity of the optimal strategies. We first start with the max--min problem. Assuming, as in Remark \ref{rem::complex}, that the players switch their strategies a total of $K$ times over $[0,T]$, we conclude that the worst-case complexity of the strategy of either player is $\mathcal{O}(K\cdot m\log m)$ as their strategies involve merely the ranking of sets of size at most $2m$. As for the min--max problem, the complexity of the adversary's strategy is $\mathcal{O}(K\cdot m\log m)$. The main bottleneck in the strategy of the designer is step $2$ in Algorithm \ref{algorithm}. The size of the set $\mP(0)$ is at most $m-\ell$; thus, the worst-case complexity for the designer is $K \cdot \sum_{i=1}^{m-\ell}{m-\ell \choose i} \approx K\cdot \sum_{i=1}^\ell(m-\ell)^i$. By comparison with (\ref{bruteForce}), we conclude that the optimal strategies achieve vast complexity reductions.

\subsection{An Illustrative Example} \label{sec::example}
The goal of this example is twofold: (i) to show how the players execute their strategies; and (ii) to serve as a counter example showing that an SPE may not exist and to provide some guidelines as to when one would exist. We will study the interaction between the designer and the adversary for the case when $T=\tau$, and $\tau$ is very small. By Assumption \ref{assume::costSwitch}, we conclude that the players cannot change the actions they choose at time $t=0$. Assume that $\mG$ is a complete graph with three nodes with the following weights:
\begin{eqnarray*}
A(0,0) = \left[\begin{array}{ccc}-4 & 3 & 1 \\3 & -5 & 2 \\1 & 2 & -3\end{array}\right].
\end{eqnarray*}
Define $e_1 = (1,2)$, $e_2 = (2,3)$, $e_3 = (1,3)$. Let $\nu_{12}=-1$, $\nu_{23}=-2$, and $\nu_{13}=-5$. Let $x(0)=[1,2,3]^T$ and $\ell = 1$. Consider the following two cases:

\noindent \textbf{Case 1:} \emph{($b =1$)}  Let us first consider the max--min problem. We have $\mD_{1} =\Phi_1( \{(e_1,-3),(e_1,-4),\\(e_2,-4),(e_3,-5),(e_2,-6),(e_3,-10) \}) = \{e_3 \}$. Hence, according to Theorem \ref{thm::maxminMax}, the adversary breaks $e_3$, and we have that $\mE(u^\star) = \mE\setminus e_3$. We also have $\mF_1(u^\star) =\{e_2\}$, which means that $v^\star = [0,1,0]^T$ and $u^\star = [0,0,1]^T$. Hence, using (\ref{eqn::utilSmall}), we can write
\begin{eqnarray*} 
\underline{V}&=&\int_{0}^{T} k(t)\cdot 2t [3(x_1(0) - x_2(0))^2+ 3(x_2(0) - x_3(0))^2] dt+ \bigO{\delta^2} \\
&=& \int_{0}^{T} k(t)\cdot 12t dt+ \bigO{\delta^2}.
\end{eqnarray*}

For the min--max problem, Algorithm \ref{algorithm} uses the following sets $\mL_1(0) = \{e_3 \}$ and $\mP(0) = \{(e_1,-3),(e_2,-4) \}$. Let $\mS = \{e_2\}$, and note that $\mS \in \Phi(\mP(0)) = \{e_1, e_2 \}$. We then have $v_S(1) = [0,1,0]^T$ and $\mL_1(v_S(1)) = \{e_2 \}$. Note that $\mL_1(0) \notin \mL_1(v_\mS(1))$. Hence, the condition in step $2$ of the algorithm is satisfied with this choice of $\mS$, and we have $v^\star = v_\mS(1)$. Then, Theorem \ref{thm::minmaxMin} says that the designer will increase the weight of $e_2$, and Theorem \ref{thm::implement} says that the adversary will break the same link, i.e., $v^\star = [0,1,0]^T$ and $u^\star = [0,1,0]^T$. We thus have
\begin{equation*} 
\overline{V}=\int_{0}^{T} k(t)\cdot 14t dt+ \bigO{\delta^2}.
\end{equation*}
We conclude that in this case $\overline{V} > \underline{V}$, and an SPE does not exist.

\noindent \textbf{Case 2:} \emph{($b =0.4$)} By repeating the above steps, we conclude that in the max--min problem we have $v^\star = [0,0.4,0]^T$ and $u^\star = [0,0,1]^T$, and we can write
\begin{equation*} 
\underline{V}=\int_{0}^{T} k(t)\cdot 10.8t dt+ \bigO{\delta^2}.
\end{equation*}

For the min--max problem, one cannot find a set $\mS$ satisfying the conditions of step 2 in Algorithm \ref{algorithm}. To execute step 8 of the algorithm, note that $\mL_1(0) = \{e_3 \}$, and hence $\Phi_1(\overline{\mP}(0)) = \{e_2\}$. We therefore have $v^\star = [0,0.4,0]^T$ and $u^\star = [0,0,1]^T$, and hence
\begin{equation*} 
\overline{V}=\int_{0}^{T} k(t)\cdot 10.8t dt+ \bigO{\delta^2}.
\end{equation*}
In this case, the pair of inequalities (\ref{eqn::SPE}) are satisfied and an SPE exists. The main difference between the two cases was that the designer was able to find a set $\mS$ that allows him to alter the ranking and deceive the adversary when $b=1$. This made the adversary break $e_3$ in the max--min problem and break $e_2$ in the min--max problem which led to having $\overline{V} \neq \underline{V}$. When such a set does not exit, the strategy of the adversary is unchanged in both problems, and hence the upper and lower values would agree. Hence, for an SPE to exist, one needs a behavior similar to Case 2 to occur throughout the problem horizon $[0,T]$. This of course depends on the value of $b$ and the weights $a_{ij}$. Section \ref{suffCond} explores the question of existence of an SPE further.

\section{A Sufficient Condition for the Existence of an SPE} \label{suffCond}
Thus far, we have solved the min--max and max--min problems separately and showed that the derived optimal strategies achieve the upper and lower values. Hence, to prove the existence of an SPE, it remains to verify whether the pair of inequalities (\ref{eqn::SPE}) can be satisfied under some assumptions, even though the action sets of the players are non-rectangular in the max--min problem. Besides the issue of non-rectangular action sets, the main reason that the upper and lower values are different is mainly due to the ability of the minimizer to \emph{deceive} the maximizer by altering the ranking of the most negative values. If we remove this ability from the network designer, we should expect that an SPE would exist. The following theorem makes this argument formal. Define $\gamma :=\frac{4\vnorm{x_0}_\infty^2}{\epsilon^2}$, $\epsilon>0$. We assume that $\epsilon$ is chosen to guarantee $\gamma > 1$.
\begin{theorem}
Given $\epsilon >0$, assume that $T$ is small enough such that (\ref{eqn::upperLowerBoudns}) in the Appendix holds. Then, under Assumptions \ref{assume::costSwitch} and \ref{assume::commonInfo}, a sufficient condition for the existence of an SPE for the underlying zero-sum game between the designer and the adversary is to select $b$ such that
\begin{equation} \label{eqn::sufCond}
0 \leq b \leq \min_{e_{ij}, e_{kl} \in \mE} \left| \gamma a_{ij} - a_{kl} \right|,
\end{equation}
given that $a_{ij} \neq a_{kl}$ and $a_{ij} > \gamma a_{kl}$ whenever $a_{ij} > a_{kl}$, for all $e_{ij},e_{kl}\in \mE$.
\end{theorem}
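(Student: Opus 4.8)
The plan is to exploit the fact that Theorems~\ref{thm::implement}--\ref{thm::maxminMax} already pin down the optimal strategies of the two Stackelberg problems and show that they attain, respectively, the upper and lower values; what remains is to produce a pair $(u^\star,v^\star)$ that is optimal in \emph{both} problems and to verify the saddle-point inequalities \eqref{eqn::SPE} for it. The example of Section~\ref{sec::example} makes clear that $\overline V$ and $\underline V$ can differ only through the designer's ability to \emph{deceive} the adversary --- to reorder the products $(a_{ij}+v_{ij})\nu_{ij}$ so that a link the adversary would otherwise cut is spared (step~2 of Algorithm~\ref{algorithm}). So I would proceed in three steps: (i) establish two-sided bounds on $|\nu_{ij}(t)|$ on $[0,T]$; (ii) use \eqref{eqn::sufCond} together with the weight-separation hypothesis to show that the adversary's set of the $\ell$ most negative products is invariant under \emph{every} admissible designer action, so deception is impossible; (iii) deduce that the two problems share a common solution which then satisfies \eqref{eqn::SPE}.

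For step~(i), recall from the proof of Theorem~\ref{thm::implement} that $x(t)=P(t)\tilde P x_0$ with $P(t),\tilde P$ doubly stochastic, so each $x_i(t)$ is a convex combination of the entries of $x_0$; hence $|x_i(t)-x_j(t)|\le 2\vnorm{x_0}_\infty$ and $|\nu_{ij}(t)|\le 4\vnorm{x_0}_\infty^2=\gamma\epsilon^2$ throughout $[0,T]$. For the matching lower bound I would invoke the hypothesis that $T$ is small, i.e.\ inequality~\eqref{eqn::upperLowerBoudns} of the Appendix, to keep the state away from consensus, so that every link $e_{ij}$ whose endpoint values still disagree satisfies $|\nu_{ij}(t)|\ge\epsilon^2$. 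Links with equal endpoint values contribute $w_{ij}\approx 0$, the largest possible value, and are never chosen by the adversary while $\ell$ disagreeing links remain --- the only exceptions being the ``cut smaller than $\ell$'' and ``consensus already reached'' alternatives, which are exactly the degenerate cases already handled in Theorems~\ref{thm::implement} and~\ref{thm::maxminMin}. The upshot is $|\nu_{ij}(t)|/|\nu_{kl}(t)|\le\gamma$ for any two active links and any $t\in[0,T]$.

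For step~(ii), combine the ratio bound with the standing hypothesis: if $a_{ij}>a_{kl}$ then $a_{ij}>\gamma a_{kl}$, whence $a_{ij}|\nu_{ij}|\ge a_{ij}\epsilon^2>\gamma a_{kl}\epsilon^2\ge a_{kl}|\nu_{kl}|$. Thus the set of the $\ell$ most negative products $a_{ij}\nu_{ij}$ is exactly the set of the $\ell$ heaviest links, namely $\mL_\ell(0)$, regardless of time. The designer can raise a product only by adding $b$ to the lighter of two links, and a direct estimate using the two-sided $\nu$-bounds shows that $(a_{kl}+b)|\nu_{kl}|\le a_{ij}|\nu_{ij}|$ survives precisely when $b$ lies in the range~\eqref{eqn::sufCond}; a uniform addition of $b$ to all links likewise leaves this ranking and therefore $\mD_\ell$ unchanged, so $\mD_\ell=\mL_\ell(0)$. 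Hence the test in step~2 of Algorithm~\ref{algorithm} fails for every $i$: the designer's min--max strategy is its step~8 branch, $v^\star_{ij}=b$ on $\Phi_\ell(\overline{\mP}(0))$, and the adversary cuts $\mL_\ell(0)$; while in the max--min problem the adversary again cuts $\mD_\ell=\mL_\ell(0)$ and, by Theorem~\ref{thm::maxminMin}, the designer best-responds on $\mF_\ell(u^\star)=\Phi_\ell(\overline{\mP}(0))$, since $\mE(u^\star)=\mE\setminus\mL_\ell(0)$. So the same pair $(u^\star,v^\star)$ solves both problems.

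Finally, for step~(iii), I would verify \eqref{eqn::SPE} directly. Because $v^\star$ only touches links outside $\mL_\ell(0)$ and adding $b$ to such a link cannot move it into the top-$\ell$ set, $\mL_\ell(v^\star)=\mL_\ell(0)$, so by Theorem~\ref{thm::implement} the adversary's cut of $\mL_\ell(v^\star)$, i.e.\ $u^\star$, is a best response to $v^\star$: $J(u,v^\star)\le J(u^\star,v^\star)$ for all $u\in\mU$. Conversely the designer's choice on links already cut by $u^\star$ does not change $A(u^\star,\cdot)$, so minimizing $J(u^\star,\cdot)$ over $\mV$ is the same as minimizing over $\mV(u^\star)$, and by Theorem~\ref{thm::maxminMin} this minimum is attained at $v^\star$: $J(u^\star,v^\star)\le J(u^\star,v)$ for all $v\in\mV$. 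Together these are precisely \eqref{eqn::SPE}, so $(u^\star,v^\star)$ is an SPE. I expect the main obstacle to be step~(ii) --- carrying the worst-case combination of $\nu$-ratios and $b$-perturbation across \emph{all} pairs of links (not only adjacent ones in the weight order) to recover exactly the bound~\eqref{eqn::sufCond}, and making sure the ``cut $<\ell$'' and ``consensus'' degeneracies never break the reduction to ``cut $\mL_\ell(0)$'' anywhere on $[0,T]$; once deception is excluded, step~(iii) is essentially bookkeeping on top of Theorems~\ref{thm::implement} and~\ref{thm::maxminMin}.
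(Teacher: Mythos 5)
Your proposal is correct and follows essentially the same route as the paper's proof: reduce the question to showing that the designer cannot alter the adversary's ranking (i.e., $\mL_\ell(v^\star)=\mL_\ell(0)=\mD_\ell$), use the two-sided bounds $\epsilon\leq|x_i-x_j|\leq 2\vnorm{x_0}_\infty$ from Lemma~\ref{lemma::upperLowerBounds} to bound the ratio $|\nu_{kl}|/|\nu_{ij}|$ by $\gamma$, and combine this with the $\gamma$-separation hypothesis on the weights to obtain the bound~\eqref{eqn::sufCond} on $b$. Your intermediate observation that the ordering of the products $a_{ij}\nu_{ij}$ coincides with the ordering of the weights, and your explicit verification of the two inequalities in~\eqref{eqn::SPE}, are cleaner packagings of steps the paper handles by a two-case analysis and a one-line assertion, respectively, but they do not change the substance of the argument.
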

\begin{proof}
It suffices to show that $\mL_\ell(v^\star) = \mL_\ell(0) = \mD_\ell$ as this would imply that the adversary would break the same links whether he acts first or second, and as a result the strategy of the minimizer in both problems will be the same. This will guarantee that (\ref{eqn::SPE}) is satisfied. This would occur if the minimizer cannot protect any of the links in $\mL_\ell(0)$. In other words, this will happen if the minimizer cannot satisfy the condition in step $2$ of Algorithm \ref{algorithm} for any $i\in \{1,\hdots,\ell\}$. A sufficient condition for $\mL_\ell(v^\star) = \mL_\ell(0) = \mD_\ell$ to hold is to require
\begin{eqnarray*}
\min_{e_{ij}\in \Phi(\mP(0))}(a_{ij}+b)\nu_{ij} & > & \max_{e_{ij}\in \mL_\ell(0)}a_{ij}\nu_{ij}.
\end{eqnarray*}
This implies that no matter how the designer changes the weights of the links in $\Phi(\mP(0))$, he cannot make those links more negative than the links in $\mL_\ell(0)$. To satisfy this inequality, we will establish that whenever $a_{ij}\nu_{ij} > a_{kl}\nu_{kl}$, we must have $(a_{ij}+b)\nu_{ij} > a_{kl}\nu_{kl}$, for all $e_{ij},e_{kl} \in \mE$. We can then re-write the condition on $b$ as
\begin{eqnarray} \label{eqn::condOnb}
b \leq \frac{a_{ij}\nu_{ij} - a_{kl}\nu_{kl}}{-\nu_{ij}} = a_{kl} \frac{|\nu_{kl}|}{|\nu_{ij}|} - a_{ij}, \quad \forall e_{ij},e_{kl}\in \mE
\end{eqnarray}
Consider the following two cases. If $\nu_{kl}\geq \nu_{ij}$, then we must have $a_{kl}>a_{ij}$. Then, by assumption we have that $a_{kl} > \gamma a_{ij}$. By Lemma \ref{lemma::upperLowerBounds} in the Appendix, we can write
\begin{equation} \label{condB}
a_{kl} \frac{|\nu_{kl}|}{|\nu_{ij}|} - a_{ij}  \geq \frac{1}{\gamma} a_{kl} - a_{ij}>0.
\end{equation}
Next, consider the case when $\nu_{ij}>\nu_{kl}$. In this case, $a_{ij}$ can be larger or smaller than $a_{kl}$. However, if $a_{ij}>a_{kl}$, and recalling that $a_{ij}\nu_{ij} > a_{kl}\nu_{kl}$, then
\begin{equation*}
\gamma a_{kl} < a_{ij} < a_{kl}\frac{|\nu_{kl}|}{|\nu_{ij}|}\leq \gamma a_{kl},
\end{equation*}
which is a contradiction. The case $a_{kl}=a_{ij}$ is excluded by assumption. Hence, in this case, we must have $a_{ij}<a_{kl}$, and the inequality in (\ref{condB}) applies. Thus, by choosing $b$ as in (\ref{eqn::sufCond}), we obtain the condition we are seeking. Note that we do not need to consider the case when $a_{ij}\nu_{ij} = a_{kl}\nu_{kl}$ since the players will be indifferent as to which link to choose.
\end{proof}

\begin{remark}
The condition derived in the above theorem requires the network to be ``sufficiently diverse" in the sense that the weights of the links have to be not only different from each other, but also a factor $\gamma$ apart. This is due to the fact that we were seeking uniform bounds on the $\nu_{ij}$'s, for all $e_{ij} \in \mE$. If we allow $b$ to vary with time, then one can find less restrictive conditions to ensure the existence of an SPE. However, this would require (\ref{eqn::condOnb}) to be verified at each time instant. Further, the bound derived in (\ref{condB}) is loose, because it was obtained by bounding $|v_{kl}|$ and $|v_{ij}|$ independently, for $e_{ij},e_{kl} \in \mE$. Tighter bounds could be given by studying the dynamics of $|\nu_{kl}|/|\nu_{ij}|$. However, studying the time derivative of this ratio is not tractable.
\end{remark}

\begin{remark}
This result highlights the fact that, in general, Stackelberg games are more natural to study security problems than zero-sum games. In fact, the leader-follower formulation fits many real-world security scenarios; see \cite{korzhyk2011stackelberg} and the references therein. However, the sufficient condition we derive here is a step in the right direction for establishing the existence of an SPE for the zero-sum game between the designer and the adversary. We are currently investigating whether this condition is also necessary.

\end{remark}

\section{Conclusion} \label{Conclusion}
In this paper, we have studied the impact of an adversarial attack on a network of agents performing consensus averaging. The adversary's objective is to slow down the convergence of the computation at the nodes to the global average. We introduced a network designer whose objective is to assist the nodes reach consensus by countering the attack of the adversary. The adversary and the network designer are capable of targeting links. We have formulated and solved two problems that capture the competition between the two players. We considered practical models for the players by constraining their actions along the problem horizon. The derived strategies were shown to exhibit a low worst-case complexity. When Zeno behavior is excluded, we showed that the optimal strategies admit a potential-theoretic analogy. Finally, we showed that when the link weights are sufficiently diverse, an SPE exists for the zero-sum game between the designer and the adversary.

Future work will focus on removing Assumption \ref{assume::costSwitch} and showing that Zeno behavior can be ruled out in optimality. Formulating the problems in discrete-time is also of interest. Another interesting line of research is to derive the optimal strategies when the knowledge of the players about the state and the topology of the network is restricted. When applying necessary conditions for optimality, e.g., the MP, to the min--max or the max--min problem, one must first prove the existence of optimal controllers. Such results can be viewed as existence results for equilibria in the general framework of Stackelberg games. This is another avenue for future research.

\section*{Acknowledgment}
The authors would like to thank Prof. Behrouz Touri for valuable comments during the development of this work. We are also very grateful for the constructive comments made by the Associate Editor and the reviewers, which helped improve the original manuscript.

\bibliographystyle{IEEEtran}
\bibliography{references}

\begin{thebibliography}{10}
\providecommand{\url}[1]{#1}
\csname url@samestyle\endcsname
\providecommand{\newblock}{\relax}
\providecommand{\bibinfo}[2]{#2}
\providecommand{\BIBentrySTDinterwordspacing}{\spaceskip=0pt\relax}
\providecommand{\BIBentryALTinterwordstretchfactor}{4}
\providecommand{\BIBentryALTinterwordspacing}{\spaceskip=\fontdimen2\font plus
\BIBentryALTinterwordstretchfactor\fontdimen3\font minus
  \fontdimen4\font\relax}
\providecommand{\BIBforeignlanguage}[2]{{%
\expandafter\ifx\csname l@#1\endcsname\relax
\typeout{** WARNING: IEEEtran.bst: No hyphenation pattern has been}%
\typeout{** loaded for the language `#1'. Using the pattern for}%
\typeout{** the default language instead.}%
\else
\language=\csname l@#1\endcsname
\fi
#2}}
\providecommand{\BIBdecl}{\relax}
\BIBdecl

\bibitem{KhanaferTouriBasarCDC13}
A.~Khanafer, B.~Touri, and T.~Ba\c{s}ar, ``Robust distributed averaging on
  networks with adversarial intervention,'' in \emph{Proc. 52nd IEEE Conf. on
  Decision and Control (CDC)}, December 2013, pp. 7131--7136.

\bibitem{SaberMurray}
R.~Olfati-Saber and R.~M. Murray, ``Consensus problems in networks of agents
  with switching topology and time-delays,'' \emph{{IEEE} Trans. Automat.
  Contr.}, vol.~49, no.~9, pp. 1520--1533, 2004.

\bibitem{SaberFlocking}
R.~Olfati-Saber, ``Flocking for multi-agent dynamic systems: Algorithms and
  theory,'' \emph{{IEEE} Trans. Automat. Contr.}, vol.~51, no.~3, pp. 40--420,
  2006.

\bibitem{NedicOzdaglarParrilo}
A.~Nedi\'c, A.~Ozdaglar, and A.~Parrilo, ``Constrained consensus and
  optimization in multi-agent networks,'' \emph{{IEEE} Trans. Automat. Contr.},
  vol.~55, no.~4, pp. 922--938, 2010.

\bibitem{JacksonGolub}
M.~O. Jackson and B.~Golub, ``Naive learning in social networks: Convergence,
  influence and wisdom of crowds,'' \emph{American Economic J.:
  Microeconomics}, vol.~2, no.~1, pp. 112--149, 2010.

\bibitem{NedicOzdaglar}
A.~Nedi\'c and A.~Ozdaglar, ``Convergence rate for consensus with delays,''
  \emph{J. Global Optimization}, vol.~47, no.~3, pp. 437--456, 2010.

\bibitem{XiaoBoyd}
L.~Xiao, S.~Boyd, and S.-J. Kim, ``Distributed average consensus with
  least-mean-square deviation,'' \emph{J. Parallel and Distributed Computing},
  vol.~67, no.~1, pp. 33--46, 2007.

\bibitem{TouriNedic}
B.~Touri and A.~Nedi\'c, ``Distributed consensus over network with noisy
  links,'' in \emph{Proc. 12th Internat. Information Fusion Conf.}, 2009, pp.
  146--154.

\bibitem{KashyapBasarSrikant}
A.~Kashyap, T.~Ba\c{s}ar, and R.~Srikant, ``Quantized consensus,''
  \emph{Automatica}, vol.~43, no.~7, pp. 1192--1203, 2007.

\bibitem{Sundaram}
S.~Sundaram and C.~Hadjicostis, ``Distributed function calculation via linear
  iterative strategies in the presence of malicious agents,'' \emph{{IEEE}
  Trans. Automat. Contr.}, vol.~56, no.~7, pp. 1495--1508, July 2011.

\bibitem{SandbergJohansson}
A.~Teixeira, H.~Sandberg, and K.~Johansson, ``Networked control systems under
  cyber attacks with applications to power networks,'' in \emph{Proc. American
  Control Conference (ACC)}, July 2010, pp. 3690--3696.

\bibitem{KhanaferTouriBasarNecSys12}
A.~Khanafer, B.~Touri, and T.~Ba\c{s}ar, ``Consensus in the presence of an
  adversary,'' in \emph{Proc. 3rd IFAC Workshop on Distributed Estimation and
  Control in Networked Systems (NecSys)}, September 2012, pp. 276--281.

\bibitem{tse2005fundamentals}
D.~Tse and P.~Viswanath, \emph{Fundamentals of wireless communication}.\hskip
  1em plus 0.5em minus 0.4em\relax Cambridge university press, 2005.

\bibitem{remy2014lte}
J.~R{\'e}my and C.~Letamendia, \emph{LTE Standards}, ser. ISTE.\hskip 1em plus
  0.5em minus 0.4em\relax Wiley, 2014.

\bibitem{Goyal2010Robust}
S.~Goyal and A.~Vigier, ``Robust networks,'' mimeo, University of Cambridge,
  Faculty of Economics, Tech. Rep., 2010.

\bibitem{LeBlanc}
H.~J. LeBlanc, H.~Zhang, S.~Sundaram, and X.~Koutsoukos, ``Consensus of
  multi-agent networks in the presence of adversaries using only local
  information,'' in \emph{Proc. 1st Internat. Conf. High Confidence Networked
  Systems (HiCoNS)}, 2012, pp. 1--10.

\bibitem{BicchiBullo}
F.~Pasqualetti, A.~Bicchi, and F.~Bullo, ``Consensus computation in unreliable
  networks: {A} system theoretic approach,'' \emph{{IEEE} Trans. Automat.
  Contr.}, vol.~57, no.~1, pp. 90--104, January 2012.

\bibitem{kar2010distributed}
S.~Kar and J.~M. Moura, ``Distributed consensus algorithms in sensor networks:
  Quantized data and random link failures,'' \emph{IEEE Transactions on Signal
  Processing}, vol.~58, no.~3, pp. 1383--1400, 2010.

\bibitem{BasarOlsder}
T.~Ba\c{s}ar and G.~J. Olsder, \emph{Dynamic Noncooperative Game Theory}.\hskip
  1em plus 0.5em minus 0.4em\relax Philadelphia, U.S.: SIAM Series in Classics
  in Applied Mathematics, 1999.

\bibitem{Norris}
J.~Norris, \emph{Markov Chains}.\hskip 1em plus 0.5em minus 0.4em\relax New
  York.: Cambridge Series in Statistical and Probabilistic Mathematics, 1997.

\bibitem{korzhyk2011stackelberg}
D.~Korzhyk, Z.~Yin, C.~Kiekintveld, V.~Conitzer, and M.~Tambe, ``{Stackelberg}
  vs. {Nash} in security games: {An} extended investigation of
  interchangeability, equivalence, and uniqueness.'' \emph{Journal of
  Artificial Intelligence Research}, vol.~41, pp. 297--327, 2011.

\bibitem{Isaacs1965DG}
R.~Isaacs, \emph{Differential Games}.\hskip 1em plus 0.5em minus 0.4em\relax
  New York: Wiley, 1965.

\end{thebibliography}

\appendix

\subsection{Proof of Theorem \ref{thm::anotherProof}  }
For a fixed strategy $v$ of the designer, it was shown in \cite{KhanaferTouriBasarCDC13} that the adversary's strategy derived using the MP requires finding the lowest $f_{ij} = (a_{ij}+v_{ij})(p_i-p_j)(x_j-x_i)$ values, for all $e_{ij} \in \mE$. However, Theorem \ref{thm::implement} requires finding the lowest $w_{ij}$'s. The designer's strategy relies on finding the lowest $(p_i-p_j)(x_j-x_i)$ values according to the MP, and it requires finding the lowest $\nu_{ij}$'s according to Theorem \ref{thm::minmaxMin}. In order to prove the theorem, and since $w_{ij} = (a_{ij}+v_{ij})\nu_{ij}$, $a_{ij}+v_{ij} \geq 0$, it is sufficient to show that $w_{ij} \leq w_{kl}$ implies that $f_{ij} \leq f_{kl}$, for all $e_{ij}$, $e_{kl} \in \mE$. Without loss of generality, we will assume that $v_{ij}=v_{kl}=0$. The Hamiltonian associated with the min--max problem is:
\[
H(x,p,u,v) = \frac{1}{2}k(t)\vnorm{x(t)-\bar{x}}_2^2 + p(t)^TA(u(t),v(t))x(t),
\]
where $p$ is the costate vector, whose existence is guaranteed by the MP because an optimal solution for the min--max exists. The first-order necessary conditions for optimality are (noting that $A^T=A$ and recalling that $V = V(0)$) \cite{Isaacs1965DG}:
\begin{eqnarray}
\dot{p} & = & -\frac{\partial }{\partial x}H \nonumber \\
& = & -k(x-\bar{x}) - Ap, \quad p(T) = 0 \label{eqn::ODEp1} \\
\dot{x} & = & Ax, \quad x(0) = x_0 \label{eqn::ODEx1} \\
u^\star(v) & = & \argmax_{U} H(x,p,u,v), \quad v^\star = \argmax_{V} H(x,p,u^\star(v),v).\nonumber
\end{eqnarray}
To prove the theorem, we will rely on approximating the state and costate up to first-order using Taylor expansion. To this end, we partition the problem's horizon into $L > K$ small sub-intervals of length $0<\delta\leq\tau$, where $\tau$ was defined in (\ref{eqn::dwellTime}), over which the system is time-invariant. More formally, define the times $0=t_1<t_2<\hdots<t_{L} < t_{L+1}=T$. Let $A_i$ be the system matrix corresponding to the interval $[t_i,t_{i+1}]$, $i=1,\hdots,L$. We will denote the $i$-th row of matrix $A_k$ by $A_{k,i}$ and its $(i,j)$-th element by $a^k_{ij}$. The proof comprises two steps: (i) we establish the claim of the theorem over $[t_{L},t_{L+1}]$; and (ii) we generalize the argument to hold over $[0,T]$. We start by considering the interval $[t_{L},t_{L+1}]$. The solutions to ODEs (\ref{eqn::ODEp1}) and (\ref{eqn::ODEx1}) over this interval are:
\begin{eqnarray}
x_{A_{L}}(t) & = & e^{A_L(t-t_L)}x_{A_L}(t_L) \\
p_{A_L}(t) & = & \int_t^T e^{-A_L(t-\tau)}(x_{A_L}(\tau)-\bar{x}) d\tau. \label{eqn::pDynNew}
\end{eqnarray}
Let $P_i(t) := e^{A_it} = I + tA_i + \bigO{\delta^2}$. We can then re-write the above expressions as
\begin{eqnarray*}
x_{A_L}(t) & = & P_L(t-t_L)x_{A_L}(t_L)\\
& = & (I + (t-t_L)A)x_{A_L}(t_L) + \bigO{\delta^2}\\
p_{A_L}(t) & = & \int_t^T P_L(\tau -t)[P_L(\tau-t_L)-M]x_{A_L}(t_L) d\tau\\
& = & (T-t)(I-M)x_{A_L}(t_L) + \bigO{\delta^2},
\end{eqnarray*}
where the last equality follows because $(T-t)(T-t_L)A_L = \bigO{\delta^2}$. Define $\xi(\alpha,\beta) := \alpha-\beta$, $\alpha,\beta \in \real$, and write
\begin{eqnarray}
x_{A_L}(t) & = & \left(I + \xi(t,t_L)A_L\right)x_{A_L}(t_L) + \bigO{\delta^2}\\
p_{A_L}(t) & = & \xi(T,t)(I-M)x_{A_L}(t_L) + \bigO{\delta^2}. \label{pExand1Int}
\end{eqnarray}
Further, define the matrices $G := I + \xi(t,t_L)A_L, R:=\xi(T,t)(I-M)$, and write
\begin{eqnarray*}
w_{ij} & = & a^L_{ij}(x_{A_L,i} -x_{A_L,j})(x_{A_L,j}-x_{A_L,i}) \\
& = & a^L_{ij}x_{A_L}(t_L)^T(G_i-G_j)(G_j-G_i)^Tx_{A_L}(t_L) + \bigO{\delta^2}\\
f_{ij} & = & a^L_{ij}x_{A_L}(t_L)^T(R_i-R_j)(G_j-G_i)^Tx_{A_L}(t_L)+ \bigO{\delta^2},
\end{eqnarray*}
where $R_i^T$, $R_i^T$ are the $i$-th row of $G$ and $R$, respectively. Using the above definitions, we obtain
\begin{eqnarray*}
(G_i-G_j)(G_j-G_i)^T & = & -(I_i-I_j)(I_i-I_j)^T - \xi(t,t_L)((I_i-I_j)(A_{L,i}-A_{L,j})^T\\
&&+(A_{L,i}-A_{L,j})(I_i-I_j)^T) - \xi(t,t_L)^2(A_{L,i}-A_{L,j})(A_{L,i}-A_{L,j})^T.
\end{eqnarray*}
The last term is quadratic, and thus we can absorb it in $\bigO{\delta^2}$. We then have
\begin{eqnarray*}
&& a^L_{ij}(G_i-G_j)(G_j-G_i)^T - a^L_{kl}(G_k-G_l)(G_l-G_k)^T  = a^L_{kl}(I_k-I_l)(I_k-I_l)^T\\
&&-a^L_{ij}(I_i-I_j)(I_i-I_j)^T + (a^L_{kl}(I_k-I_l)(A_{L,k}-A_{L,l})^T-a^L_{ij}(I_i-I_j)(A_{L,i}-A_{L,j})^T)\xi(t,t_L)\\
&& + (a^L_{kl}(A_{L,k}-A_{L,l})(I_k-I_l)^T-a^L_{ij}(A_{L,i}-A_{L,j})(I_i-I_j)^T)\xi(t,t_L) + \bigO{\delta^2}.
\end{eqnarray*}
Similarly, we have
\begin{eqnarray*}
&&a^L_{ij}(R_i-R_j)(G_j-G_i)^T - a^L_{kl}(R_k-R_l)(G_l-G_k)^T  = (a^L_{kl}(I_k-I_l)(I_k-I_l)^T\\
&& -a^L_{ij}(I_i-I_j)(I_i-I_j)^T)\xi(T,t)  + \bigO{\delta^2}.
\end{eqnarray*}
Let $\Gamma_1 = a^L_{kl}(I_k-I_l)(I_k-I_l)^T-a^L_{ij}(I_i-I_j)(I_i-I_j)^T$ and $\Gamma_2 = a^L_{kl}(I_k-I_l)(A_{L,k}-A_{L,l})^T-a^L_{ij}(I_i-I_j)(A_{L,i}-A_{L,j})^T$. We now have
\begin{eqnarray*}
w_{ij}-w_{kl} & = & x_{A_L}(t_L)^T(\Gamma_1 + \xi(t,t_L) \Gamma_2 + \xi(t,t_L) \Gamma_2^T)x_{A_L}(t_L) + \bigO{\delta^2}, \\
f_{ij}-f_{kl} & = & \xi(T,t)x_{A_L}(t_L)^T\Gamma_1x_{A_L}(t_L) + \bigO{\delta^2}.
\end{eqnarray*}
If $w_{ij}-w_{kl}\leq 0$, since $\xi(T,t) \geq 0$, we can write
\begin{equation*}
\xi(T,t)(w_{ij}-w_{kl}) = x_{A_L}(t_L)^T(\xi(T,t)\Gamma_1 + \xi(T,t)\xi(t,t_L) \Gamma_2 + \xi(T,t)\xi(t,t_L) \Gamma_2^T)x_{A_L}(t_L) + \bigO{\delta^2} \leq 0,
\end{equation*}
or $\xi(T,t)x_{A_L}(t_L)^T\Gamma_1x_{A_L}(t_L) + \bigO{\delta^2} \leq 0$, but the left hand side is $f_{ij}-f_{kl}$; hence, $w_{ij} \leq w_{kl} \implies f_{ij} \leq f_{kl}$ as required. So far, we have verified the claim of the theorem over the interval $[t_L,T]$ only. We are now in a position to generalize the statement of the theorem to the interval $[0,T]$. The only complication that arises when studying this interval is that the terminal condition, i.e. $p_{L-1}(t_L)$, is not forced to be zero as in $[t_L,T]$. Over the interval $[t_{L-1},t_{L}]$, the state and costate are
\begin{eqnarray*}
x_{L-1}(t) & = &e^{A_{L-1}(t-t_{L-1})}x_{A_{L-1}}(t_{L-1}) \\
p_{A_{L-1}}(t) & = & e^{-A_{L-1}(t-t_{L-1})}p_{A_{L-1}}(t_{L-1}) -\int_{t_{L-1}}^t e^{-A_{L-1}(t-\tau)}(x_{A_{L-1}}(\tau) -\bar{x}) d\tau.
\end{eqnarray*}
Solving for $p_{A_{L-1}}(t_{L-1})$ in terms of $p_{A_{L-1}}(t_L)$ and substituting back, we can write $p_{A_{L-1}}(t)$ in terms of $p_{A_{L-1}}(t_L)$ as follows:
\begin{equation} \label{eqn::pDynApprox}
p_{A_{L-1}}(t) = e^{-A_{L-1}(t-t_L)}p_{A_{L-1}}(t_L) +\int_t^{t_L} e^{-A_{L-1}(t-\tau)}(x_{A_{L-1}}(\tau) -\bar{x}) d\tau.
\end{equation}
By continuity of the state and costate functions, it follows that $x_{A_{L-1}}(t_L) = x_{A_L}(t_L)$, $p_{A_{L-1}}(t_L) = p_{A_L}(t_L)$. Using a first-order Taylor expansion and (\ref{pExand1Int}), we can write
\begin{eqnarray*}
p_{A_{L-1}}(t) & = & (I+\xi(t_L,t)A_{L-1})p_{A_L}(t_L) + \xi(t_L,t)(I-M)x_{A_{L-1}}(t_{L-1})+\bigO{\delta^2}\\
& = & \xi(t_L,t)(I-M)x_{A_{L-1}}(t_L) +  \xi(t_L,t)(I-M)x_{A_{L-1}}(t_{L-1})+\bigO{\delta^2}.
\end{eqnarray*}
We can further simplify this expression using $x_{A_{L-1}}(t)$ as follows:
\begin{eqnarray*}
\xi(t_L,t)(I-M)x_{A_{L-1}}(t_L) & = & \xi(t_L,t)(I-M)e^{A_{L-1}(t_L-t_{L-1})}x_{A_{L-1}}(t_{L-1}) \\
& = &\xi(t_L,t)(I-M)x_{A_{L-1}}(t_{L-1}) + \bigO{\delta^2},
\end{eqnarray*}
and therefore we have
\begin{equation} \label{pExand2Int}
p_{A_{L-1}}(t) = 2\xi(t_L,t)(I-M)x_{A_{L-1}}(t_{L-1}) + \bigO{\delta^2}.
\end{equation}
Comparing (\ref{pExand1Int}) and (\ref{pExand2Int}), we conclude that the argument used to prove the claim over the interval $[t_L,T]$ applies over $[t_{L-1},t_L]$. Hence, $w_{ij}-w_{kl}  \leq 0$ implies that $f_{ij}-f_{kl} \leq 0$ over $[t_{L-1},t_L]$.

Note that we can generalize (\ref{eqn::pDynApprox}) to any interval $[t_i,t_{i+1}]$, $i=1,\hdots,L$, as follows:
\[
p_{A_i}(t) = e^{-A_i(t-t_{i+1})}p_{A_i}(t_{i+1}) +\int_t^{t_{i+1}} e^{-A_i(t-\tau)}(x_{A_i}(\tau) -\bar{x}) d\tau.
\]
Following similar steps to the above, we can arrive at
\[
p_{A_i}(t) = \frac{T-t_i}{\delta}\xi(t_{i+1},t)(I-M)x_{A_i}(t_i)+\bigO{\delta^2}, \quad t\in[t_i,t_{i+1}],
\]
which maintains the same structure as in (\ref{pExand2Int}), and the claim therefore holds for the interval $[t_i,t_{i+1}]$, $i=1,\hdots,L$, and the theorem is proved.

\subsection{Technical Results}
{
\begin{proposition} \label{prop::bigO}
Given $\tau_1,\tau_2,\tau_3$, which were defined in terms of $\delta>0$ in Theorem \ref{thm::implement}, let $f$ be a real-valued function. Then, if $f(\delta) = \bigO{\tau_i^2}$ as $\delta \to 0$, we have $f(\delta) = \bigO{\delta^2},$ $i \in\{1,2,3\}$. Also, if $f(\delta) = \tau_i\bigO{\tau_j^2}$ as $\delta \to 0$, then $f(\delta) = \bigO{\delta^3},$ $i,j \in\{1,2,3\}$.
\end{proposition}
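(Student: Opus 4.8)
The plan is to reduce both assertions to the single elementary fact that each of $\tau_1,\tau_2,\tau_3$ lies in $[0,\delta]$. Recalling from the proof of Theorem \ref{thm::implement} that $\tau_1 = \ts - t_0$, $\tau_2 = (t_0+\delta) - \ts$, and $\tau_3 = t - (t_0+\delta)$ with $\ts \in [t_0,t_0+\delta]$ and $t\in[t_0+\delta,t_0+2\delta]$, we get $0 \le \tau_i \le \delta$ for $i\in\{1,2,3\}$, and these bounds hold uniformly over the admissible choices of $\ts$ and $t$.

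First I would unpack the hypothesis of the first statement: $f(\delta) = \bigO{\tau_i^2}$ as $\delta\to 0$ means there exist $C>0$ and $\delta_0>0$ such that $|f(\delta)| \le C\tau_i^2$ for all $\delta\in(0,\delta_0)$. Since $0\le\tau_i\le\delta$ gives $\tau_i^2 \le \delta^2$, we obtain $|f(\delta)| \le C\delta^2$ on $(0,\delta_0)$, i.e. $f(\delta) = \bigO{\delta^2}$. For the second statement I would argue analogously: $f(\delta) = \tau_i\bigO{\tau_j^2}$ means $|f(\delta)| \le C\tau_i\tau_j^2$ for $\delta$ small; bounding $\tau_i \le \delta$ and $\tau_j^2 \le \delta^2$ yields $|f(\delta)| \le C\delta^3$, hence $f(\delta)=\bigO{\delta^3}$.

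The only point needing any care is the reading of the $\mathcal{O}$-symbol: the inequality $\tau_i\le\delta$ must hold uniformly in the free parameters $\ts$ and $t$, which it does by construction, so the implied constants pass through unchanged and no dependence on $\ts,t$ is introduced. I do not anticipate a genuine obstacle here; this is a bookkeeping lemma, isolated precisely so that the first-order Taylor expansions in the proof of Theorem \ref{thm::implement} (and in the proofs of Theorems \ref{thm::minmaxMin}, \ref{thm::maxminMin}, \ref{thm::maxminMax}) may freely absorb the various mixed $\tau_i$-order remainder terms into a single power of $\delta$, so that the stated dominance of the first-order term for small $\delta$ is justified.
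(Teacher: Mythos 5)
Your argument is correct and is essentially the paper's own proof: both reduce each claim to the bound $0\le\tau_i\le\delta$ and then pass the implied constant through $\tau_i^2\le\delta^2$ and $\tau_i\tau_j^2\le\delta^3$. Your added remark on uniformity in $\ts$ and $t$ is a harmless refinement, not a different route.
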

\begin{proof}
Recall that we write $f(x) = \bigO{g(x)}$, for some real-valued function $g$, as $x\to a$ if there exist constants $M,\gamma$ such that $|f(x)| \leq M|g(x)|$, for all $x$ satisfying $|x-a| < \gamma$. Since $f(\delta) = \bigO{\tau_i^2}$ as $\delta \to 0$, and recalling that by definition we have $\tau_i \leq \delta$ for $i\in \{1,2,3\}$, we can write $f(\delta) \leq M\tau_i^2 \leq M \delta^2$. Hence, $f(\delta) = \bigO{\delta^2}$. To prove the second statement, recall that $h(x)\bigO{g(x)} = \bigO{h(x)g(x)}$, for any two real-valued functions $h,g$. Hence, as $\delta \to 0$, we have $f(\delta) = \tau_i\bigO{\tau_j^2} = \bigO{\tau_i \tau_j^2}$. Therefore, $f(\delta) \leq M\tau_i \tau_j^2\leq M\delta^3$ and $f(\delta) = \bigO{\delta^3}$.
\end{proof}
}
\begin{lemma} \label{lemma::upperLowerBounds}
Given $\epsilon > 0$ and $\delta \leq \tau$, $\tau$ defined in (\ref{eqn::dwellTime}), one can select the problem horizon $T$ small enough such that
\begin{equation} \label{eqn::upperLowerBoudns}
\epsilon \leq |x_i(t)-x_j(t)| \leq 2\vnorm{x_0}_{\infty}, \quad \forall e_{ij} \in \mE,
\end{equation}
for all $t\in [0,T]$.
\end{lemma}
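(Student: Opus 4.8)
The plan is to treat the two inequalities in \eqref{eqn::upperLowerBoudns} separately: the upper bound holds for \emph{every} horizon and is essentially immediate, while the lower bound is a short-horizon continuity estimate for which the smallness of $T$ is genuinely needed.

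For the upper bound, I would recall from the proof of Theorem~\ref{thm::implement} that along any admissible pair $(u,v)$ the state can be written as $x(t) = \tilde P(t)\,x_0$, where $\tilde P(t)$ is a finite product of matrix exponentials $e^{A_k s}$ of negative graph Laplacians and hence doubly stochastic for every $t\ge 0$. Consequently each $x_i(t)$ is a convex combination of the entries of $x_0$, so $|x_i(t)| \le \vnorm{x_0}_\infty$ for all $i$ and all $t$, and therefore $|x_i(t)-x_j(t)| \le |x_i(t)| + |x_j(t)| \le 2\vnorm{x_0}_\infty$. This half needs no restriction on $T$.

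For the lower bound, the key observation is that since the adversary breaks at most $\ell$ links and the designer adds at most $b$ to at most $\ell$ weights, the family of system matrices $A(u,v)$ that can ever arise is \emph{finite}; set $C := \max_{(u,v)} \vnorm{A(u,v)}_\infty < \infty$ (the maximum absolute row sum). Along any admissible trajectory $\dot x(t) = A(u(t),v(t))x(t)$, so by the bound above $\vnorm{\dot x(t)}_\infty \le C\,\vnorm{x(t)}_\infty \le C\,\vnorm{x_0}_\infty$, and integrating gives $|x_i(t)-x_i(0)| \le C\vnorm{x_0}_\infty\, t$ for every node $i$ and every $t$. Hence, for any edge $e_{ij}\in\mE$,
\[
|x_i(t)-x_j(t)| \ \ge\ |x_i(0)-x_j(0)| - 2C\vnorm{x_0}_\infty\, t .
\]
Writing $d_0 := \min_{e_{ij}\in\mE}|x_i(0)-x_j(0)|$ and noting that the claimed lower bound can only hold when $\epsilon < d_0$ (which is therefore implicit in the hypotheses), it suffices to take any $T \le (d_0-\epsilon)/(2C\vnorm{x_0}_\infty)$ to force $|x_i(t)-x_j(t)| \ge \epsilon$ on $[0,T]$ for all $e_{ij}\in\mE$. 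Combining the two parts gives the lemma.

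The only genuinely delicate point is making the Lipschitz constant $C$ uniform over \emph{all} admissible control pairs rather than along one fixed switching signal; this is exactly where Assumption~\ref{assume::costSwitch} enters, since the finite switching count together with the finite cardinality of the action sets $U$ and $V(u)$ makes the set of reachable system matrices finite, so the maximum defining $C$ is attained and finite. Beyond that, the argument is a routine triangle-inequality and integration estimate and uses nothing about the potential-theoretic structure of the optimal strategies.
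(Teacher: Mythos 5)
Your proof is correct, and for the lower bound it takes a genuinely different route from the paper. For the upper bound the two arguments are essentially equivalent (you use double stochasticity of $x(t)=\tilde P(t)x_0$ to get $|x_i(t)|\le \vnorm{x_0}_\infty$; the paper instead observes that $\max_{i,j}|x_i-x_j|$ is non-increasing along the averaging flow), and neither needs $T$ small. For the lower bound, the paper sets up auxiliary scalar comparison dynamics $\overline y_i,\underline y_i$ sandwiching each $x_i$ via the comparison principle, solves them in closed form, and computes an explicit ``meeting time'' $t_i^\star=\frac{1}{a_{i-1}-a_i}\ln\bigl(x_1(t_0)/x_n(t_0)\bigr)+t_0$ before which adjacent states cannot collide, then takes $T$ below that time; this is done interval by interval under an ordering assumption $x_1(0)>\cdots>x_n(0)$ stated only in a footnote. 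Your argument replaces all of this with a uniform Lipschitz bound $\vnorm{\dot x(t)}_\infty\le C\vnorm{x_0}_\infty$, where $C$ is the maximum row sum over the \emph{finite} family of reachable matrices $A(u,v)$, giving $|x_i(t)-x_j(t)|\ge d_0-2C\vnorm{x_0}_\infty t$ with $d_0:=\min_{e_{ij}\in\mE}|x_i(0)-x_j(0)|$. This is shorter, avoids the ordering assumption and the comparison ODEs, and yields an explicit admissible horizon $T\le (d_0-\epsilon)/(2C\vnorm{x_0}_\infty)$; what it gives up is the sharper, weight-dependent meeting-time estimate the paper's construction could in principle provide. You are also right to flag that the lemma is only meaningful for $\epsilon\le d_0$ (equivalently, that all initial edge differences are nonzero), which the paper handles only through its footnoted assumption of distinct initial values; making that hypothesis explicit, as you do, is an improvement rather than a gap. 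One small remark: finiteness of the set of reachable matrices $A(u,v)$ follows already from the finiteness of the action sets $U$ and $V(u)$ and does not actually require Assumption~\ref{assume::costSwitch}, though the latter is what makes $x$ piecewise $C^1$ so that your integration step is legitimate.
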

\begin{proof}
By the structure of the system matrix in (\ref{systemEqn}), we can deduce that $|x_i-x_j|$ cannot increase as $t\to T$. Thus
\begin{eqnarray*}
|x_i(t)-x_j(t)| & \leq & \max_{1\leq i,j \leq n} |x_i(0) -x_j(0)| \\
& \leq & 2\max_{1\leq i \leq n}|x_i(0)| = 2 \vnorm{x_0}_{\infty}.
\end{eqnarray*}
This provides the uniform upper bound. In order to obtain a uniform lower bound, we need to ensure that $|x_i(t)-x_j(t)|$ does not approach zero as $t\to T$. We are seeking a time $t^\star$ such that for a given $\epsilon>0$, we have $|x_i(t)-x_j(t)| \geq \epsilon$ for all $t < t^\star$ and all $e_{ij} \in \mE$. We can then fix $T<t^\star$ to ensure the existence of a uniform lower bound on $|x_i(t)-x_j(t)|$. Let us again restrict our attention to a small interval $[t_0,t_0+\delta]$ where the system is time-invariant, and let the system matrix over this interval be $A$. We require that the system did not reach equilibrium over this interval, i.e., $x(t_0+\delta) \neq \bar{x}$. Without loss of generality, we assume that $x_1(t_0) > \hdots > x_n(t_0)$\footnote{We are making the implicit assumption that $x_1(0) > \hdots > x_n(0)$.}. Define the following dynamics
\begin{eqnarray*}
\frac{d}{dt}(\overline{y}_i - x_1(t_0))  =  \sum_{j\neq i}A_{ij}(x_1(t_0)-\overline{y}_i), \quad
\frac{d}{dt}(\underline{y}_i-x_n(t_0)) = \sum_{j\neq i}A_{ij}(x_n(t_0)-\underline{y}_i),
\end{eqnarray*}
with initial conditions $\overline{y}_i(t_0) = 2x_1(t_0)$, $\underline{y}_i(t_0) = 2x_n(t_0)$. Note that $\dot{x}_i = \sum_{j\neq i}A_{ij}(x_j-x_i)$. It follows that $\dot{\underline{y}}_i \leq \dot{x}_i \leq \dot{\overline{y}}_i$. By the comparison principle, we conclude that $\underline{y}_i - x_n(t_0) \leq x_i \leq \overline{y}_i-x_1(t_0)$, for $i \in \mN$. Note that we can readily find the solution trajectories for $\overline{y}$ and $\underline{y}$. By defining $a_i = \sum_{j\neq i} A_{ij}$, we can then write
\begin{eqnarray*}
\overline{y}_i - x_1(t_0) = e^{-a_i(t-t_0)}x_1(t_0) , \quad \underline{y}_i - x_n(t_0) = e^{-a_i(t-t_0)}x_n(t_0).
\end{eqnarray*}
By solving the equation $\overline{y}_{i-i}-x_1(t_0)=\underline{y}_i-x_n(t_0)$, we can find a time $t^\star_i$ when $x_{i-1}$ can potentially meet $x_i$:
\[
t_i^\star = \frac{1}{a_{i-1}-a_i}\ln\left(\frac{x_1(t_0)}{x_n(t_0)}\right) + t_0.
\]
If $t^\star_i > t_0 + \delta$, for all $i\in \mN$, then we need to propagate the solution forward, and keeping in mind that the system matrix could change, until we find a time $t^\star_i$ in some interval $[\tilde{t},\tilde{t}+\delta]$ where $\overline{y}_{i-i}=\underline{y}_i$ for some $i\in \mN$. Then, for a given $\epsilon>0$, we can select $T < t_i^\star$ such that $|x_i-x_{i-1}|\geq |\underline{y}_i-\overline{y}_{i-1}| \geq \epsilon$; hence, we conclude that for this choice of $T$ we can guarantee that $|x_i-x_j|\geq \epsilon>0$ for all $e_{ij} \in \mE$.
\end{proof}

\end{document}